\newtheorem{definition}{\textbf{Definition}}[section]
\newtheorem{proposition}{\textbf{Proposition}}[section]
\newtheorem{assumption}{Assumption}
\newtheorem{principle}{\textbf{Principle}}[section]
\newtheorem{property}{\textbf{{Property}}}[section]
\newcommand{\rmnum}[1]{\romannumeral #1}
\newcommand{\Rmnum}[1]{\expandafter\@slowromancap\romannumeral #1@}
\newcolumntype{C}[1]{>{\centering\arraybackslash}m{#1}}
\definecolor{azure(colorwheel)}{rgb}{0.0, 0.5, 1.0}
\definecolor{frenchblue}{rgb}{0.0, 0.45, 0.73}
\definecolor{bittersweet}{rgb}{1.0, 0.44, 0.37}
\definecolor{green(pigment)}{rgb}{0.0, 0.65, 0.31}
\definecolor{navyblue}{rgb}{0.0, 0.0, 0.5}
\definecolor{darkcerulean}{rgb}{0.03, 0.27, 0.49}
\definecolor{darkpowderblue}{rgb}{0.0, 0.2, 0.6}
\definecolor{egyptianblue}{rgb}{0.06, 0.2, 0.65}
\begin{document}
%
\title{Towards Designing Cost-Optimal Policies to Utilize IaaS Clouds with Online Learning}
%
%
%
%

\author{Xiaohu~Wu,
        Patrick~Loiseau,
        and~Esa Hyyti\"a
\IEEEcompsocitemizethanks{\IEEEcompsocthanksitem Xiaohu Wu is with Fondazione Bruno Kessler, Trento,
Italy. 
E-mail: xiaohuwu@fbk.eu
\IEEEcompsocthanksitem Patrick Loiseau is with Univ. Grenoble Alpes, Inria, CNRS, Grenoble INP, LIG, France and MPI-SWS, Germany. 
E-mail: patrick.loiseau@inria.fr

\IEEEcompsocthanksitem Esa Hyyti\"a is with University of Iceland, Reykjav\'ik, Iceland. 
E-mail: esa@hi.is

}
\thanks{Manuscript received April 19, 2005; revised August 26, 2015.}}

%
%

\markboth{Journal of \LaTeX\ Class Files,~Vol.~14, No.~8, August~2015}%
{Shell \MakeLowercase{\textit{et al.}}: Bare Demo of IEEEtran.cls for Computer Society Journals}
%



\IEEEtitleabstractindextext{%
\begin{abstract}
Many businesses possess a small infrastructure that they can use for their computing tasks, but also often buy extra computing resources from clouds. Cloud vendors such as Amazon EC2 offer two types of purchase options: on-demand and spot instances. As tenants have limited budgets to satisfy their computing needs, it is crucial for them to determine how to purchase different options and utilize them (in addition to possible self-owned instances) in a cost-effective manner while respecting their response-time targets. In this paper, we propose a framework to design policies to allocate self-owned, on-demand and spot instances to arriving jobs. In particular, we propose a near-optimal policy to determine the number of self-owned instances and an optimal policy to determine the number of on-demand instances to buy and the number of spot instances to bid for at each time unit. Our policies rely on a small number of parameters and we use an online learning technique to infer their optimal values. Through numerical simulations, we show the effectiveness of our proposed policies, in particular that they achieve a cost reduction of up to 64.51\% when spot and on-demand instances are considered and of up to 43.74\% when self-owned instances are considered, compared to previously proposed or intuitive policies.
\end{abstract}

\begin{IEEEkeywords}
On-demand instances, spot instances, cost efficiency, online learning.
\end{IEEEkeywords}
}

\maketitle

\IEEEdisplaynontitleabstractindextext

%
\IEEEpeerreviewmaketitle

\section{Introduction}

Infrastructure-as-a-Service (IaaS) holds exciting potential for users to elastically scale their computation capacity up and down to match the time-varying demand. This eliminates the need of purchasing their own servers to satisfy the peak demand, without causing unacceptable latencies. The global cloud IaaS market grew to \$34.6 billion in 2017, and is projected to increase to \$71.6 billion in 2020 \cite{Gartner}. Main IaaS service providers include Amazon, Microsoft, Google, etc. Amazon is the most popular and represents 51.8\% of the global market share in 2017; here, two typical purchase options are on-demand and spot instances (i.e., virtual machines). Recently, the issue of cost-effectively utilizing these two types of instances has received significant attention \cite{Kumar18a}.

On-demand instances are always available at a fixed price and tenants\footnote{In this paper, we use "users" and "tenants" interchangeably.} pay only for the period in which instances are consumed at an hourly rate. Users can also bid a price for spot instances and successfully get them only if their bid is above the spot price. However, spot instances will get lost once the spot price becomes higher than their bid. Here, spot prices usually vary unpredictably over time and users will be charged the spot prices for their use. Compared to on-demand instances, spot instances can reduce the cost by up to 50-90\% \cite{aws}. Users that purchase cloud instances may also have their own instances, referred to as self-owned instances, which can be used to process jobs but are insufficient at times (hence the need to purchase extra IaaS instances). They may also not have any self-owned instances (e.g., in the case of startups) and therefore need to buy from the cloud all necessary computing resources.

Tenants' jobs arrive over time. We focus on processing a type of embarrassingly parallel workloads/jobs \cite{Fox11a,Gunarathne10}. Each job can be separated into a large number of small tasks. These tasks are independent and can be executed on multiple machines simultaneously. Completing a job means completing all its tasks and the maximum completion time of all tasks is the job's completion time. This type of jobs accounts for a significant proportion in cloud market; examples include 3D video rendering, BLAST searches, data cleaning and pre-processing. Such job is also called malleable job \cite{Jain12,Nagarajan13a,Wu15a,Wu15b} and it has a parallelism bound specifying the maximum number of instances that it can utilize simultaneously. Each job also has constraint on timing, i.e., a deadline by which to complete all tasks of a job. Subject to the parallelism constraint, an arriving job will be allocated instances of different types (self-owned, on-demand and spot) and the allocation can be updated at most once every hour (since billing is done per hour). Our problem is then to find an allocation that minimizes cost while satisfying the deadline constraint.

\vspace{0.08em}\noindent\textbf{Challenges.} In this paper, we make a natural assumption that the costs of utilizing self-owned, spot and on-demand instances are increasing. To be cost-optimal, an allocation policy should sequentially maximize the utilization of self-owned and then spot instances. This is, however, a difficult task. For instance, {\em a naive policy} would be, whenever a job arrives, to assign as many remaining self-owned instances as possible to it. However, this policy turns out not to be good wrt cost. Indeed, it ignores the difference of jobs and treats all jobs equally when assigning instances, whereas we find that a good policy wrt cost needs instead to determine the allocations of self-owned instances to jobs according to their capabilities of utilizing spot instances. In particular, subject to the parallelism constraint of a job $j$, the availability of spot instances varies in the period between its arrival and deadline and determines the maximum workload of $j$ that could be processed by spot instances. If the workload of $j$ is large, $j$ has to utilize some stable self-owned or on-demand instances in order to finish itself by the deadline; such job is said to have poor capability of utilizing spot instances alone to finish itself by its deadline (also called poor jobs). If the workload of $j$ is small, finishing $j$ by its deadline only needs to utilize spot instance alone, with no need of self-owned or on-demand instances; such job is said to have strong such capability (called rich job).

A better policy would then be as follows. When self-owned instances are inadequate, actively assign self-owned instances to poor jobs and assign nothing to rich jobs; otherwise, such poor jobs will have to consume costly on-demand instances, and it also causes a waste of other rich jobs' capabilities to utilize spot instances.
When self-owned instances are adequate, assign a proper amount of self-owned instances to every job with either poor or strong capability such that after the allocations all jobs are expected to be completed by utilizing spot instances alone, eliminating the need of consuming costly on-demand instances. After allocating self-owned instances, the remaining question is to identify a job's capacity to utilize spot instances for processing its workload, and propose an expected optimal policy to achieve the capacities of jobs, further escaping unnecessary consumption of on-demand instances.

\vspace{0.15em}\noindent\textbf{Our Contributions.} In this paper, we propose a framework to design policies to allocate various instances. Based on the two principles that (\rmnum{1}) self-owned instances should be allocated to maximize their utilization while maximizing the opportunity of all jobs utilizing spot instances and (\rmnum{2}) on-demand instances should be allocated to maximize the opportunity to utilize spot instances, we propose parametric policies for the allocation of self-owned, on-demand and spot instances that achieve near-minimal costs. To cope with the cloud market dynamic and the uncertainty of job's characteristics, we use the online learning technique in \cite{Jain14} to infer the optimal parameters. More specifically:
\begin{itemize}
\item We propose a cost-effective policy for allocating self-owned instances that is smarter than the naive allocation mentioned above and hits a good trade-off between the utilization of self-owned instances and the opportunity of utilizing spot instances. We show in our numerical experiments that this policy improves the cost by up to 43.74\% compared to the naive policy.
\item We propose a cost-optimal policy for the utilization of on-demand and spot instances, based on a formulation of the original problem as an integer program to maximize the utilization of spot instances. This policy can be used both when the tenant has self-owned resources and when he does not. Our simulation results show that it improves the cost of previous policies in \cite{Jain14} by up to 64.51\%.
\end{itemize}
We note that the paper \cite{Jain14} also appears in \cite{Jain14P} as a U.S. Patent.

The rest of this paper is organized as follows. We introduce the related works in Section~\ref{sec.related-work} and describe the problem formally in Section~\ref{sec.model}. In Section~\ref{sec.scheduling}, we propose scheduling policies for self-owned, on-demand and spot instances. In Section~\ref{sec.evaluation}, simulations are done to show the effectiveness of the solutions of this paper. Finally, we conclude this paper in Section~\ref{sec.conclusion}. The proofs of all propositions are omitted and can be found in the appendix. We note that a part of results of this paper also appeared at the 2017 IEEE International Conference on Cloud and Autonomic Computing \cite{Wu17}.

\section{Related Work}
\label{sec.related-work}

In this paper, we use the online learning technique to learn the most effective parametric policy for utilizing various instances. Jain et al. were the first to consider the application of this approach to the scenario of cloud computing\footnote{The objective of this paper corresponds to a special case of \cite{Jain14} where the value of each job is larger than the cost of completing it.} \cite{Jain14}. However, they do not consider the problem of how to optimally utilize the purchase options in IaaS clouds and self-owned instances are also not taken into account. This approach is interesting because it does not impose the restriction of a priori statistical knowledge of workload, compared to other techniques such as stochastic programming. However, it can achieve good performances only if the potentially optimal scheduling policies are identified among all possible policies. Similar to our paper and \cite{Jain14}, cost-effectively executing deadline-constrained jobs in IaaS clouds is also studied in \cite{Song12b,Yao14a}. In particular, Zafer et al. characterize the evolution of spot prices by a Markov model and propose an optimal bidding strategy for utilizing spot instances to complete a serial or parallel job by some deadline \cite{Song12b}. Yao et al. study the problem of utilizing reserved and on-demand instances to complete online batch jobs by their deadlines and formulate it as integer programming problems; then heuristic algorithms are proposed to give approximate solutions \cite{Yao14a}.

There have been substantial works on cost-effective resource provisioning in IaaS clouds \cite{Manvi}, and we introduce some typical approaches. Built on the assumption of a priori statistical knowledge of the workload or spot prices, several techniques could be applied. For example, in \cite{Hong,Chaisiri}, the techniques of stochastic programming is applied to achieve the cost-optimal acquisition of reserved and on-demand instances; in \cite{Zheng15}, the optimal strategy for the users to bid for the spot instances are derived, given a predicted distribution over spot prices. However, a high computational complexity arises when implementing these techniques, though the statistical knowledge could be derived by the techniques such as dynamic programming \cite{Shib}.

Wang et al. use the competitive analysis technique to purchase reserved and on-demand instances without knowing the future workload \cite{Wang}, where the Bahncard problem is applied to propose a deterministic and a randomized algorithm. In \cite{Vintila13a}, a genetic algorithm is proposed to quickly approximate the pareto-set of makespan and cost for a bag of tasks where on-demand and spot instances are considered. In \cite{Shib}, the technique of Lyapunov optimization is applied and it's said to be the first effort on jointly leveraging all three common IaaS cloud pricing options to comprehensively reduce the cost of users; however, a large delay will be caused when processing jobs; in order to achieve an $\mathcal{O}(\epsilon)$ close-to-optimal performance, the queue size has to be $\Theta(1/\epsilon)$ \cite{Huang14aa}. Gao et al. proposed a two-timescale markov decision process approach by jointly considering resource provisioning and task scheduling in public clouds to maximize the profit of a multimedia service provider \cite{Gao16a}. In \cite{Dubois15a,Dubois16a}, Dubois and Casale propose a heuristic to help cloud users decide what type of spot instances should be rent and what bid price to use to minimize their cost while maintaining an acceptable level of performance.

\section{Problem Description and Model}
\label{sec.model}

In this section, we introduce the cloud pricing models, define the operational space of a user to utilize various instances, and characterize the objective of this paper.

\subsection{Pricing Models in the Cloud}
\label{sec.pricing-model}


We first introduce the pricing models in the cloud. The price of an {\em on-demand instance} is charged on an hourly basis and it is fixed and denoted by $p$. Even if on-demand instances are consumed for part of an hour, the tenant will be charged the fee of the entire hour, as illustrated in Fig.~\ref{Fig4}.
  \begin{figure}[!ht]
  \centering
  \includegraphics[width=3.25in]{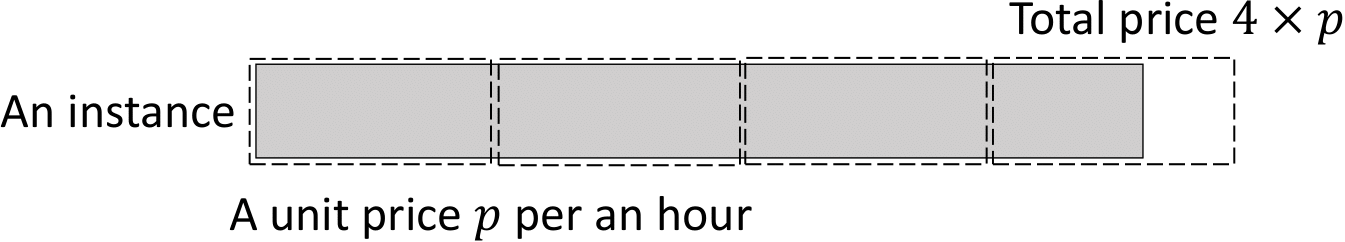}
  \caption{On-demand price: users are charged on an hourly basis at a fixed price $p$.}\label{Fig4}
 \end{figure}

Furthermore, tenants can bid a price for {\em spot instances} and {\em spot prices are updated at regular time slots} (e.g., every $L=5$ minutes in Amazon) \cite{Zheng15,Wu19a}. Spot instances are assigned to a job and continue running if the spot price is lower than the bid price. Spot prices usually change unpredictably over time \cite{Ben-Yehuda}. Once the spot price exceeds the bid price of a job, its spot instances will get terminated immediately by the cloud, as illustrated in Fig.~\ref{Fig5}; here, the termination occurs at the very beginning of a time slot. The tenant will be charged the spot prices for the maximum integer hours of execution. A partial hour of execution is not charged in the case where its instances are terminated by the cloud; in contrast, if spot instances run until a job is completed and then are terminated by the tenant, for the partial hour of execution, the tenant will also be charged for the full hour.

  \begin{figure}[!ht]
  \centering
  \includegraphics[width=3.25in]{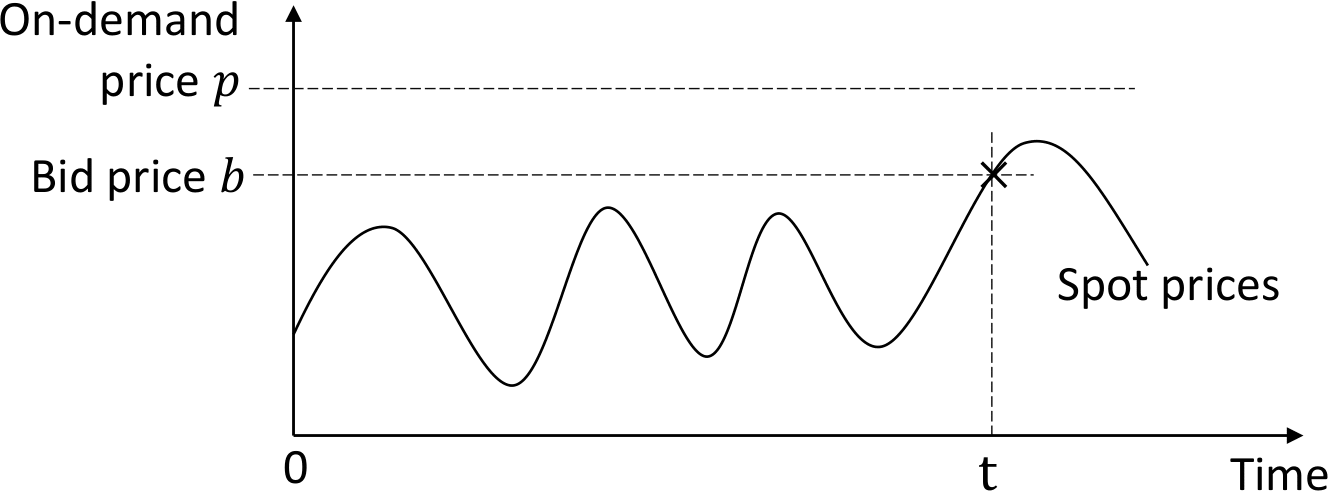}
  \caption{Spot price: a user bids a price $b$ for an instance at time $0$ and can use it until time $t$.}\label{Fig5}
  \end{figure}

Finally, a user might have its own instances, i.e., {\em self-owned instances}. The (averaged) hourly cost of utilizing self-owned instances is assumed to be $p_{1}$. It is the cheapest to use self-owned instances so that $p_{1}$ is without loss of generality assumed to be 0, which implies that a user always prefers to first utilize its own instances before purchasing instances from the cloud. An example of self-owned instances is academic private clouds, which are provided to researchers free of charge.

\subsection{Jobs}
\label{sec.jobs}

Our paper focuses on processing a type of embarrassingly parallel computations, also called map-only jobs; see \cite{Fox11a} for the classification of big-data applications on clouds. Each job can be separated into a large number of small tasks. These tasks are independent and can be executed on multiple machines simultaneously; completing a job means completing all its tasks, and the maximum completion time of all tasks is the job's completion time. Examples of such computations include parallel rendering in computer graphics, BLAST searches and CAP3 in bioinformatics, large scale facial recognition systems that compare thousands of faces, grid and random search for hyperparameter optimization in machine learning, data cleaning and pre-processing and so on. A task is the minimum running unit and should be processed continuously without preemption until its completion.

For example, in CAP3, the data are divided into many files and each task finishes reading a file; if each file has 458 reads, it may take about 7 seconds to process a task when two high CPU extra large instances are used with 8 workers per instance \cite{Gunarathne10}. For each job $j$, its size/workload is defined as the time when finishing it on one instance, denoted by $z_{j}$, and can be estimated by the input data size or the number of small tasks. While executing a job, the remaining workload can also be estimated by the remaining input data to be processed. Such jobs can be formally modeled as malleable jobs in literature \cite{Jain12,Nagarajan13a}. Each job $j$ has a parallelism bound $\delta_{j}$ that limits the maximum number of instances that it could utilize simultaneously. While executing a job, the number of instances assigned to a job could change over time.

The job arrival of a tenant is monitored every time slot of $L$ minutes (i.e., at the time points when spot prices change) and time slots are indexed by $t= 1, 2, \cdots$. Each job $j$ has four characteristics: (\rmnum{1}) {\em an arrival slot $a_{j}$}: If job $j$ arrives at a certain continuous time point in $[(t-1)\cdot L, t\cdot L)$, then set $a_{j}$ to $t$; (\rmnum{2}) {\em a relative deadline $d_{j}\in\mathcal{Z}^{+}$}: it is a time constraint on completing a job, that is, every job must be completed at or before time slot $a_{j}+d_{j}-1$; (\rmnum{3}) {\em a job size $z_{j}$} (measured in the instance time slots (CPU time) that need to be utilized): the workload to complete $j$; (\rmnum{4}) {\em a parallelism bound $\delta_{j}$}: the upper bound on the number of instances that could be simultaneously utilized by $j$. The tenant plans to rent instances in IaaS clouds to process its jobs and aims to minimize the cost of completing a set of jobs $\mathcal{J}$ (that arrive over a time horizon $T$) by their deadlines.

\subsection{Rules for Allocating Instances to Jobs}
\label{sec.rules}


The pricing models define the rules of allocating instances to jobs and also the operational space of a user, i.e., (a) when the allocation to a job is done and updated, and, (b) how various instances are utilized by a job at every allocation update.

\subsubsection{On-demand and spot instances}
\label{sec.on-demand-spot}

We first consider the allocation of on-demand and spot instances alone, ignoring self-owned instances temporarily.

To meet deadlines, (\textbf{\rmnum{1}}) {\em whenever a job $j$ arrives at $a_{j}$, the allocation of spot and on-demand instances to it is done immediately}. The following rules apply to the case where $j$ has flexibility to utilize spot instances. Given the fact that the tenant is charged on hourly boundaries, (\textbf{\rmnum{2}}) {\em the allocation of on-demand and spot instances to each job $j$ is updated simultaneously every hour.} {\em The $i$-th allocation} occurs at the beginning of slot $t=a_{j}+(i-1)\cdot Len$ where $Len = 60/L$ is the number of slots per hour; the number of on-demand instances allocated to $j$ is denoted by $o_{j}^{i}$ and they can be utilized for the entire hour; (\textbf{\rmnum{3}}) {\em the tenant will bid a price $b_{j}^{i}$ for a fixed number $si_{j}^{i}$ of spot instances.} At the $i$-th allocation of $j$, $b_{j}^{i}$ together with the spot prices determines whether $j$ can successfully obtain spot instances and how long it can utilize them. Usually, spot instances are on average cheaper than on-demand instances, and (\textbf{\rmnum{4}}) {\em at every allocation the tenant will bid for the maximum number of spot instances under the parallelism constraint, i.e., $si_{j}^{i}=\delta_{j}-o_{j}^{i}$.}
While $j$ is utilizing the instances of the $i$-th allocation in the period of $[a_{j}+(i-1)\cdot Len, a_{j}+i\cdot Len-1]$, we say that $j$ is in {\em the $i$-th execution}.

At the $i$-th allocation, we use $z_{j}^{i}$ to denote the remaining workload of $j$ to be processed, i.e., $z_{j}$ minus the workload of $j$ that has been processed. We define the current slackness of $j$ as
\begin{align}
s_{j}^{i}=\left( d_{j}-(i-1)\cdot Len \right)\cdot \delta_{j}/z_{j}^{i}.
\end{align}
The slackness can be used to measure the time flexibility that $j$ has to utilize spot instances.
\begin{definition}\label{def-2}
During the $i$-th execution where $i\geq 1$, if spot instances get lost at the beginning of some slot $t^{\prime}$ and are not utilized for the entire hour, we say that, at the next allocation,
\begin{enumerate}
\item $j$ has flexibility to utilize spot instances, if $s_{j}^{i+1} \geq 1$;
\item $j$ does not have such flexibility, otherwise.
\end{enumerate}
Here, we know the values of $s_{j}^{i+1}$ and $z_{j}^{i+1}$ since we then know the value of $z_{j}^{i}$ and the workload processed in the $i$-th execution.
\end{definition}

In Definition~\ref{def-2}, if $s_{j}^{i+1}<1$ and the ($i+1$)-th allocation is still taken at slot $a_{j}+i\cdot Len$, we have by the definition of $s_{j}^{i+1}$ that $j$ cannot be completed by its deadline even if $j$ totally utilize $\delta_{j}$ on-demand instances in the remaining period; we use $i_{j}$ to index the last (or such $i$-th) allocation after which $j$ has no flexibility to utilize spot instances. We illustrate this by Fig.~\ref{Fig.1aa}. At the first allocation, $z_{j}^{1}=z_{j}=132$ and $o_{j}^{1}=si_{j}^{1}=2$. At the second allocation, $z_{j}^{2}=132-2\cdot 12-2\cdot 8=92$, and $o_{j}^{2}$ and $si_{j}^{2}$ are still 2. In the second execution, when spot instances are terminated at the end of slot 20, we have $z_{j}^{3}=92-2\cdot 12-2\cdot 8=52$, and $s_{j}^{3}= \frac{Len\cdot \delta_{j}}{z_{j}^{3}} < 1$; thus, there is no flexibility for $j$ to utilize spot instances at the third allocation. In Fig.~\ref{Fig.1aa}, $i_{j}=2$. The decision on how to determine the ($i_{j}+1$)-th allocation of instances to $j$ has to be done earlier (than the beginning of slot $a_{j}+i_{j}\cdot Len =25$), since there exists an on-demand instance that has to be utilized for $\frac{4}{3}$ hours to satisfy the deadline constraint.

\begin{figure}[!ht]
  \centering
  	\includegraphics[width=3.2in]{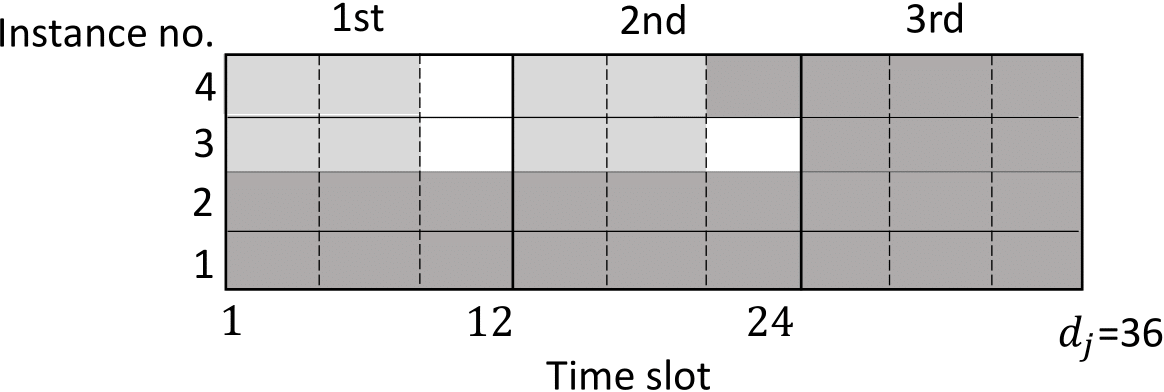}
	\caption{Illustration of the process of allocating instances to $j$ where $a_{j}=1$, $d_{j}$ equals 3 hours, $L=5$ minutes, $z_{j}=132$, and $\delta_{j}=4$: the light gray (resp. heavy gray) area in every period of $[12\cdot(i-1)+1, 12\cdot i]$ illustrates the workload processed by spot (resp. on-demand) instances in the $i$-th execution where $i=1, 2$.}
\label{Fig.1aa}
\end{figure}

Finally, the instance allocation is generally divided into two phases. In the first phase, $j$ has flexibility to utilize spot instances at every $i$-th allocation and
\begin{itemize}
\item the instance allocation is updated every hour (i.e., every $Len$ slots).
\end{itemize}
In the $i$-th execution of $j$, the workload processed by on-demand instances is $Len\cdot o_{j}^{i}$; as time goes by, there are three possible states while utilizing spot instances:
\begin{itemize}
\item [(\rmnum{1})] $z_{j}^{i}-Len\cdot o_{j}^{i}$ workload of $j$ has been processed by spot instances;
\item [(\rmnum{2})] $j$ utilizes spot instances for $Len$ slots;
\item [(\rmnum{3})] the spot instances of $j$ are interrupted by the cloud and utilized for partial hour.
\end{itemize}
With the first state, the spot instances will be terminated by its user and $j$ will be finally completed in the $i$-th execution. With the second state, since $j$ has flexibility for spot instances at the $i$-th allocation (i.e., $s_{j}^{i}\geq 1$), so does it at the ($i+1$)-th allocation. With the third state, if $s_{j}^{i+1}\geq 1$, the ($i+1$)-th allocation of $j$ is still in the first phase; otherwise,
\begin{itemize}
\item the ($i+1$)-th (i.e., ($i_{j}+1$)-th) allocation of $j$ needs to be done immediately after the spot instances get lost,
\end{itemize}
which is referred to as the second phase of instance allocation; then, only stable on-demand instances will be used to meet the deadline.

\subsubsection{Self-owned instances}
\label{sec.self-owned}

When self-owned instances are also taken into account, we assume that (\textbf{\rmnum{5}}) {\em the allocation of self-owned instances to a job can be updated at most once at every allocation of $j$}. We denote by $r_{j}^{i}$ the number of self-owned instances assigned to $j$ at the $i$-th allocation; the parallelism constraint further translates to $o_{j}^{i}+si_{j}^{i}+r_{j}^{i} = \delta_{j}$. In this paper, $o_{j}^{i}$ and $si_{j}^{i}$ denotes the numbers of on-demand and spot instances acquired at the $i$-th allocation and will be used to track the cost of completing $j$. As we will see in Section~\ref{sec.spot}, the acquired on-demand instances may not be fully utilized for an entire hour in the $(i_{j}+1)$-th execution, and, we use $o_{j}(t)$, $si_{j}(t)$ and $r_{j}(t)$ to denote the numbers of on-demand, spot and self-owned instances that are actually utilized by $j$ at every slot $t\in[a_{j}, a_{j}+d_{j}-1]$, where $r_{j}(t)=r_{j}^{i}$ for all $t\in [a_{j}+(i-1)\cdot Len, a_{j}+i\cdot Len-1]$; then the parallelism constraint translates to $o_{j}(t)+si_{j}(t)+r_{j}(t) = \delta_{j}$.

As shown later, allocating properly self-owned instances enables escaping unnecessary consumption of on-demand instances that are more expensive than the others, which can be achieved by simply allocating $j$ the same number of self-owned instances at every time slot, i.e., $r_{j}(t)=r_{j}$. So, the allocation of self-owned instances is done only once upon arrival of $j$; after the allocation, the job can could be viewed as a new job with a parallelism bound $\delta_{j}-r_{j}$, a size $z_{j}-r_{j}\cdot d_{j}$, and the same arrival time and deadline as $j$ , and it will be completed by utilizing spot and on-demand instances alone.

\subsection{Scheduling Objective}
\label{sec.objective}

We refer to the ratio of the total cost of utilizing a certain type of instances to the total workload processed by this type of instances as the average unit cost of this type of instances. As described in Section~\ref{sec.pricing-model}, we assume that
\begin{assumption}\label{assump-order}
The average unit costs of self-owned instances is lower than the average unit cost of spot instances, which is lower than that of on-demand instances. 
\end{assumption}
Accordingly, to be cost-optimal, we should consider allocating various instances to each arriving job in the order of self-owned, spot and on-demand instances. Further, in Principles~\ref{prin-one} and~\ref{prin-two}, we give the objectives that should be achieved when considering allocating each type of instances to the arriving jobs.

\begin{principle}\label{prin-one}
The scheduler should make self-owned instances (\rmnum{1}) fully utilized, and (\rmnum{2}) utilized in a way so as to maximize the opportunity that all jobs have to utilize spot instances.
\end{principle}

\begin{principle}\label{prin-two}
After self-owned instances are used, the scheduler should utilize on-demand instances in a way so as to maximize the opportunity that all jobs have to utilize spot instances.
\end{principle}

\subsubsection{Decision variables}

Our main objective of this paper is to propose scheduling policies that can realize Principles~\ref{prin-one} and~\ref{prin-two}. To do so, we will first determine the allocation of self-owned instances and then the allocation of on-demand and spot instances for every arriving job $j$. For every arriving job $j$, it will be first allocated $r_{j}$ self-owned instances in $[a_{j}, a_{j}+d_{j}-1]$. Then, as described in Section~\ref{sec.rules}, the allocation of spot and on-demand instances will be updated per hour in the first phase and we need to determine the number of spot instances to be bid for and the number of on-demand instances to be purchased (i.e., $si_{j}^{i}$ and $o_{j}^{i}$); once there is no flexibility for $j$ to utilize spot instances, we need to determine the allocation of on-demand instances alone in order to complete $j$ by deadline. Hence, the main decision variables of this paper are $r_{j}$, $si_{j}^{i}$, and $o_{j}^{i}$ where $o_{j}^{i}+si_{j}^{i}+r_{j} = \delta_{j}$.

In this paper, we apply the online learning approach and it does not require the exact statistical knowledge on jobs and spot prices. At every allocation update of $j$ in the first phase, only the current characteristics of $j$ (i.e., $z_{j}^{i}$, $\delta_{j}$, $a_{j}$, and $d_{j}$) and the amount of available self-owned instances are definitely known for a user to determine $o_{j}^{i}$ and $si_{j}^{i}$. The value of spot price is jointly determined by the arriving jobs of numerous users and the number of idle servers at a moment, usually varying over time unpredictably. In this paper, it is assumed that the change of spot prices over time is independent of the job's arrival of an individual user \cite{Song12b,Zheng15}. In the $i$-th execution of $j$, when a user bids some price for $si_{j}^{i}$ spot instances, without considering the case where the spot instances of $j$ is terminated by a user itself, the period in which $j$ can utilize spot instances is a random variable and we assume that the expected time for which $j$ could utilize spot instances is $\beta\cdot Len$ where $\beta\in [0,1]$.
Finally, Table~\ref{table} summarizes the main notation of this paper.

\begin{table}
	\centering
	\begin{threeparttable}[t]
		\caption{Main Notation}
		\begin{tabular}{| C{1.55cm} | C{6.2cm} |}
			\hline
			Symbol & Explanation\\ \hline
			
			$L$ & length of a time slot (e.g., 5 minutes) \\ \hline
			
			$Len$ & the number of time slots in an hour, i.e., $\frac{60}{L}$ \\ \hline
			
			$\mathcal{J}$ & a set of jobs that arrive over time \\ \hline
			
			$j$ and $a_{j}$ & a job of $\mathcal{J}$ and its arrival time \\ \hline
			
			$d_{j}$ & the relative deadline: $j$ must be completed by a deadline $a_{j}+d_{j}-1$ \\ \hline
			
			$z_{j}$ & the job size of $j$, measured in CPU $\times$ time slots \\ \hline
			
			$\delta_{j}$ & the parallelism bound, i.e., the maximum number of instances that can be simultaneously used by $j$ \\ \hline
			
			$T$ & the number of time slots, i.e., $\max_{j\in\mathcal{J}}\{ a_{j} \}$ \\ \hline

			$p$ and $p_{1}$ & the price of respectively using an on-demand and self-owned instance for an hour \\ \hline

			$si_{j}^{i}$, $b_{j}^{i}$, and $o_{j}^{i}$ & the number of spot instances bid for, the bid price, and the number of on-demand instances acquired at the $i$-th allocation of $j$  \\ \hline

			$r_{j}(t)$, $si_{j}(t)$ and $o_{j}(t)$ & the number of self-owned, spot and on-demand instances utilized by $j$ at a slot $t$  \\ \hline
			
			
			$z_{j}^{i}$  & the remaining workload of $j$ to be processed at the $i$-th allocation of $j$  \\ \hline
			
			$s_{j}^{i}$ & the slackness at the $i$-th allocation, i.e., $(d_{j}-(i-1)\cdot Len)\cdot \delta_{j}/ z_{j}^{i}$  \\ \hline

            $i_{j}$ & the last allocation of $j$ at which $j$ has flexibility to utilize spot instances \\ \hline	
			
			
			$r_{j}$ & the number of self-owned instances allocated to a job $j$ at every $t\in [a_{j}, a_{j}+d_{j}-1]$ \\ \hline
			
			$\beta$ & the availability of spot instances varies over time; at every allocation, the expected time for which a job could utilize spot instances is $\beta\cdot Len$ where $\beta\in [0,1]$   \\ \hline

			$R$ & the number of self-owned instances\\ \hline

            $N(t)$ & the number of currently idle self-owned instances at a slot $t$ \\ \hline

            $m_{t_{1}}(t_{2})$ & the maximum number of self-owned instances idle at every slot in $[t_{1}, t_{2}]$, i.e., $\min\left\{ N(t_{1}), \cdots, N(t_{2}) \right\}$ \\ \hline

			$b$ & the bid price   \\ \hline

			$\beta_{0}$ & a parameter that control the allocation of self-owned instances via Equation (4)  \\ \hline

			$\{\beta, \beta_{0}, b\}$ &  a parameterized policy for allocating various instances to a job at every allocation, as stated in the Section~\ref{sec.framework}   \\ \hline
			
			$\mathcal{P}$ & a set of parameterized policies    \\ \hline

            $\pi$ & the index of a policy in $\mathcal{P}$: $\pi=1, 2, \cdots$ \\ \hline					
			
		\end{tabular}
		\label{table}
	\end{threeparttable}
\end{table}

\section{The Design of Near-Optimal Policies}
\label{sec.scheduling}

In this section, we propose a theoretical framework to design (near-)optimal parametric policies, aiming at realizing Principles~\ref{prin-one} and~\ref{prin-two}. Facing diverse users, the proposed policies should have good adaptability against the unknown statistics of the spot prices and each individual user's job characteristics; then, by applying the online learning technique, the best configuration parameter that corresponds to each user could be inferred to minimize its cost of processing jobs.

Upon arrival of a job $j$, the scheduler first considers the allocation of self-owned instances to it, aiming to realize the two goals in Principle~\ref{prin-one}. Next, as described in Section~\ref{sec.on-demand-spot}, the allocation of spot and on-demand instances is updated on an hourly basis.

\subsection{Self-owned Instances}
\label{sec.long-term}

In this subsection, we study the allocation of self-owned instances.

\subsubsection{Challenge}

We first show the challenges in cost-effectively utilizing self-owned instances by an example. Initially, there is a fixed number $R$ of self-owned instances. Upon arrival of a job $j$, it is allocated a fixed number of self-owned instances, and these instances will be reserved for this job in the period $[a_{j}, a_{j}+d_{j}-1]$ and released by $j$ after the slot $a_{j}+d_{j}-1$. As time goes by, when time is at the beginning of any slot $t$, we have the information on the allocation of self-owned instances to the previous jobs (i.e., the number of self-owned instances allocated to each previous job and the period in which these instances are reserved for this job) and we can get the number of self-owned instances that are not reserved for the previous jobs in the period of each slot $t^{\prime}$, denoted by $N(t^{\prime})$. Let $m_{t_{1}}(t_{2}) = \min\left\{ N(t_{1}), \cdots, N(t_{2}) \right\}$, where $t_{1}\leq t_{2}$, and it represents the maximum number of self-owned instances idle/non-reserved at every slot in $[t_{1}, t_{2}]$. \textbf{An intuitive policy} would be, whenever a job $j$ arrives, to allocate as many self-owned instances to $j$ to make self-owned instances fully utilized, i.e.,
\begin{equation}\label{intuitive}
r_{j} = \min\{ m_{a_{j}}(a_{j}+d_{j}-1), z_{j}/d_{j} \}.
\end{equation}
However, this intuitive policy may not maximize the opportunity that all jobs have to utilize spot instances as illustrated in the following example.

There are two self-owned instance available, and two jobs whose have the same arrival time, relative deadline of 2 hours and parallelism bound of 4. Jobs 1 and 2  have a size of $4\times Len$ and $6\times Len$, respectively. It is expected that a job can utilize spot instances for $\beta=\frac{1}{2}$ hour ($\beta\cdot Len$ slots) at every allocation update. In Fig.~\ref{Fig.031}, the area of diagonal stripes, the area of vertical stripes, and the dotted area denote the workload respectively processed by spot, self-owned and on-demand instances. Using the policy (\ref{intuitive}), the whole process of allocating instances is illustrated in Fig.~\ref{Fig.031} (left), where the user has to utilize two on-demand instances for 0.5 hour; however, it is not necessary to purchase more expensive on-demand instances if the allocation process is like Fig.~\ref{Fig.031} (right). In Fig.~\ref{Fig.031} (left), the cost of completing jobs 1 and 2 is $2\cdot p$ while it is zero in Fig.~\ref{Fig.031}; here, on-demand instances are charged on an hourly basis, and the fee of utilizing spot instances is zero when they are terminated by the cloud.

\begin{figure}[!ht]
	\centering

	\includegraphics[width=3.15in]{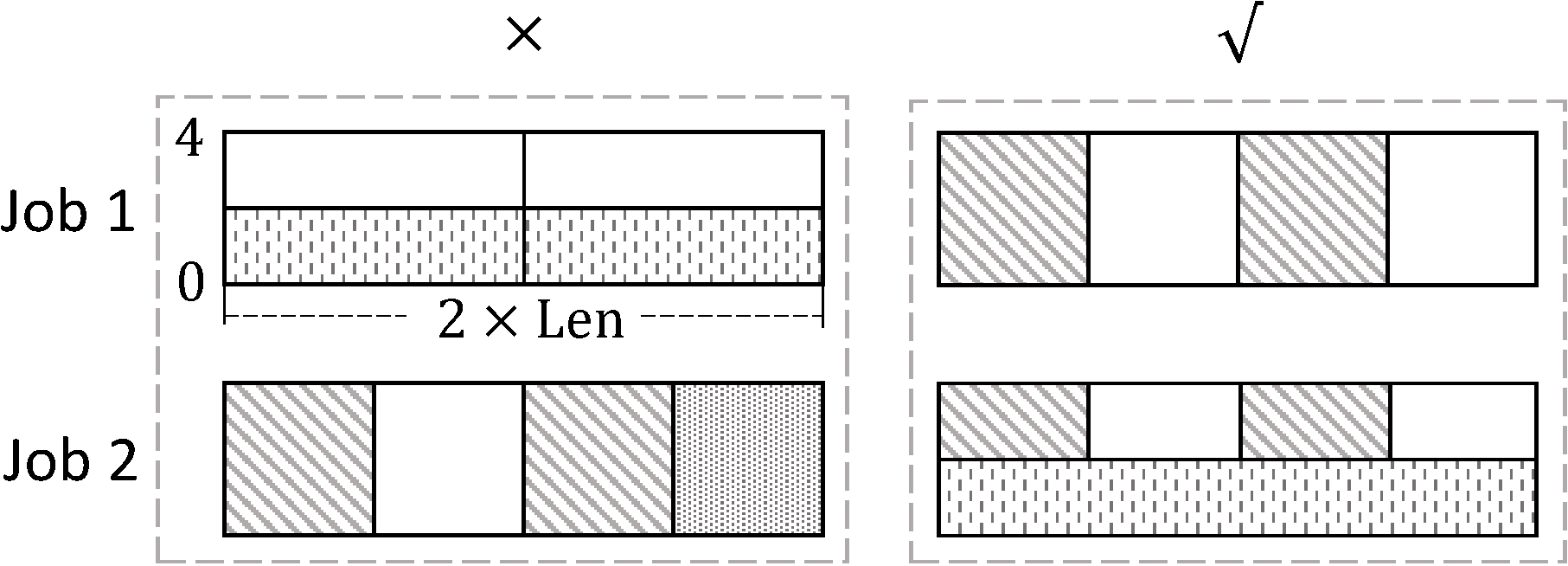}\\
	
	\caption{The Challenge in Cost-Effectively Utilizing Self-owned Instances.}\label{Fig.031}
\end{figure}

The above example reveals some challenges in designing cost-effective policies for allocating self-owned instances. For example, the policy should have the ability of (\rmnum{1}) identifying the subset of jobs that can be expected to be completed by utilizing spot instances alone even if they are not allocated any self-owned instance, e.g., the job 1 in Fig.~\ref{Fig.031} (right), and (\rmnum{2}) properly allocating self-owned instances to the rest of jobs, when self-owned instances are inadequate. All in all, our aim is to realize Principle~\ref{prin-one}.

\subsubsection{Policy Design}

In the following, we propose a policy that has the abilities described above. In the subsequent analysis, the issue of rounding the allocations of a job to integers is ignored temporarily for simplicity; in reality, we could round the allocations up to integers, which does not affect the related conclusions much as shown by the analysis.

Recall the meaning of $\beta$ in Section~\ref{sec.objective}. For every job $j$, we will go to find a function $g_{j}(x)\in [0, \frac{z_{j}}{d_{j}}]$ that satisfies the following properties where $\frac{z_{j}}{d_{j}}\leq \delta_{j}$:

\begin{property}\label{proper-2}
$g_{j}(x)$ is non-increasing as $x$ increases in $[0, 1)$.
\end{property}

\begin{property}\label{proper-1}
$g_{j}(\beta)$ is the minimum number such that when a job $j$ is assigned $r_{j}$ self-owned instances in $[a_{j}, a_{j}+d_{j}-1]$ where $r_{j} \geq g_{j}(\beta)$, it could be expected that
\begin{itemize}
\item job $j$ could be completed by its deadline by utilizing spot instances alone if $\delta_{j}-r_{j}$ spot instances are bid for at every allocation update of $j$, where no on-demand instances is acquired.
\end{itemize}
\end{property}

The value of $g_{j}(\beta)$ is an indicator of the capability that $j$ has such that it can be completed by utilizing spot instances alone. By Property~\ref{proper-1}, if $g_{j}(\beta)\leq 0$, it is expected that no self-owned or on-demand instances is needed in order to complete $j$ and such jobs have strong capability to feed themselves with spot instances. Otherwise, $g_{j}(\beta)$ self-owned instances are needed, or $j$ has to consume some amount of expensive on-demand instances in order to be completed by its deadline; for a job $j$, the larger the value of $g_{j}(\beta)$, the weaker its capability to feed itself with spot instances.

Let $\kappa_{0} = \lceil \frac{d_{j}}{Len} \rceil-1$, and we set
\[
\overline{r}_{j}(x) =
\begin{cases}
& r_{j}^{\prime}(x) \quad \text{ if } d_{j}-\kappa_{0}\cdot Len > x\cdot Len, \\
&r_{j}^{\prime\prime}(x) \quad \text{ if } d_{j}-\kappa_{0}\cdot Len \leq x\cdot Len,
\end{cases}
\]
where
\begin{align*}
r_{j}^{\prime}(x) = \delta_{j} - \frac{d_{j}\cdot \delta_{j}-z_{j}}{d_{j}-(\kappa_{0}+1)\cdot Len\cdot x},
\end{align*}
and
\[
r_{j}^{\prime\prime}(x) =
\begin{cases}
& 0\quad\quad\quad\quad\quad\quad\quad\quad \text{ if } \kappa_{0}=0, \\
&\delta_{j} - \frac{d_{j}\cdot\delta_{j}-z_{j}}{(1-x)\cdot\kappa_{0}\cdot Len}\quad \text{ if } \kappa_{0}\geq 1.
\end{cases}
\]

We further set
\begin{align}\label{fun-self-owned}
g_{j}(x)=\max\left\{\overline{r}_{j}(x), 0\right\}.
\end{align}
When $x=0$, $g_{j}(x)=\max\{ r_{j}^{\prime}(x), 0 \}=\frac{z_{j}}{d_{j}}$. When $x\rightarrow 1$, $g_{j}(x)=\max\{ r_{j}^{\prime\prime}(x), 0 \}$ and we have that (\rmnum{1}) if $\kappa_{0}=0$, $g_{j}(x)=0$, (\rmnum{2}) if $\kappa_{0}\geq 1$ and $d_{j}\cdot\delta_{j}=z_{j}$, $g_{j}(x)=\frac{z_{j}}{d_{j}}$, and (\rmnum{3}) if $\kappa_{0}\geq 1$ and $d_{j}\cdot\delta_{j}>z_{j}$, $g_{j}(x)= 0$ since $r_{j}^{\prime\prime}(x)\rightarrow -\infty$. Now, we proceed to show that the particular $g_{j}(x)$ in (\ref{fun-self-owned}) satisfies Properties~\ref{proper-1} and~\ref{proper-2}.

\begin{proposition}\label{proposi-long}
The function $g_{j}(x)$ in (\ref{fun-self-owned}) satisfies Property~\ref{proper-1}.
\end{proposition}

\begin{proposition}\label{proposi-long-1}
The function $g_{j}(x)$ in (\ref{fun-self-owned}) satisfies Property~\ref{proper-2}.
\end{proposition}

In this paper, we consider a set of jobs $\mathcal{T}$ that arrive over time and can have diverse characteristics. When $x$ ranges in $[0, 1)$, we illustrate the function $g_{j}(x)$ in Fig.~\ref{Fig.00310} where $z_{j}=240$, $L=5$, $\delta_{j}=20$, and $Len=12$. The job's minimum execution time is $\frac{z_{j}}{\delta_{j}}=Len$ where $j$ is assigned $\delta_{j}$ instances throughout its execution. The job's deadline reflects its ability to utilize spot instances and in Fig.~\ref{Fig.00310} the solid curves from left to right represent $g_{j}(x)$ where $d_{j}$ is respectively
$5$, $3$, $2.1$, $1.47$, $1.25$, $1.11$, and $1.02$ times $Len$: under the same $x$, the larger the deadline, the smaller the value of $g_{j}(x)$. Given $z_{j}$, $\delta_{j}$ and $d_{j}$, we can see in Fig.~\ref{Fig.00310} that the function $g_{j}(x)$ is non-increasing as $x$ ranges in $[0, 1)$.

\begin{figure}[!ht]
	\centering
	
	\includegraphics[width=3.05in]{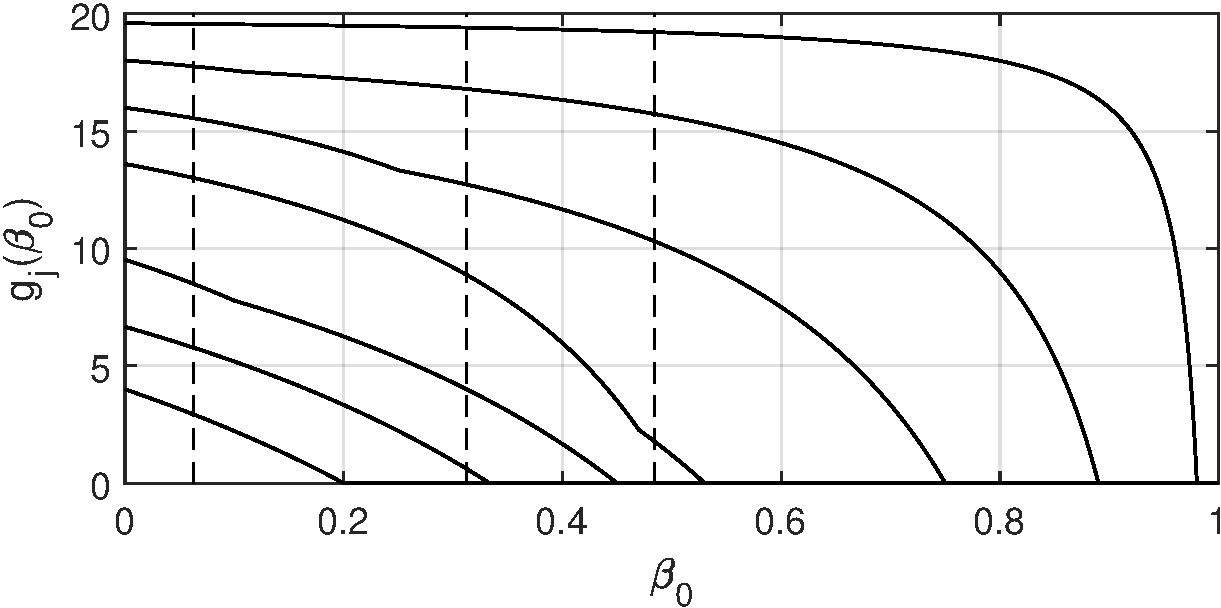}\\
	
	\caption{As $x$ ranges in $[0, 1)$, the function $g_{j}(x)$ for jobs respectively with different flexibility to utilize spot instances.}\label{Fig.00310}
\end{figure}

\vspace{0.20em}\noindent\textbf{Proposed Policy.} Based on Propositions~\ref{proposi-long} and \ref{proposi-long-1}, we propose the following policy for allocating self-owned instances. Upon arrival of every job $j$, it is allocated $r_{j}(\beta_{0})$ self-owned instances where
\begin{equation}\label{policy-long-term}
 r_{j}(\beta_{0}) =  \min\left\{g_{j}(\beta_{0}), m_{t}(a_{j}+d_{j}-1) \right\},
\end{equation}
where $\beta_{0}\in [0,1)$ is a parameter to be learned.

This policy achieves more cost-effective resource allocation as illustrated in Fig.~\ref{Fig.031} (right) where $\beta_{0}$ is set to $\beta=\frac{1}{2}$. Furthermore, this policy is also adaptive. For example, given another user who owns more instances (e.g., 5 instances), $\beta_{0}$ can be set to a value $<\beta$ (e.g., 0); then, both jobs are allocated more self-owned instances: $r_{1}=2$, and $r_{2}=3$. As a result, self-owned instances are fully utilized and there is no need purchasing spot or on-demand instances.

\subsubsection{Explanation}
\label{sec.explanation}

Now, we further explain that the policy (\ref{policy-long-term}) has desired properties to realize Principle~\ref{prin-one}, which will also be validated by the simulations.

\begin{figure}[!ht]
	\centering
	
	\includegraphics[width=3.3in]{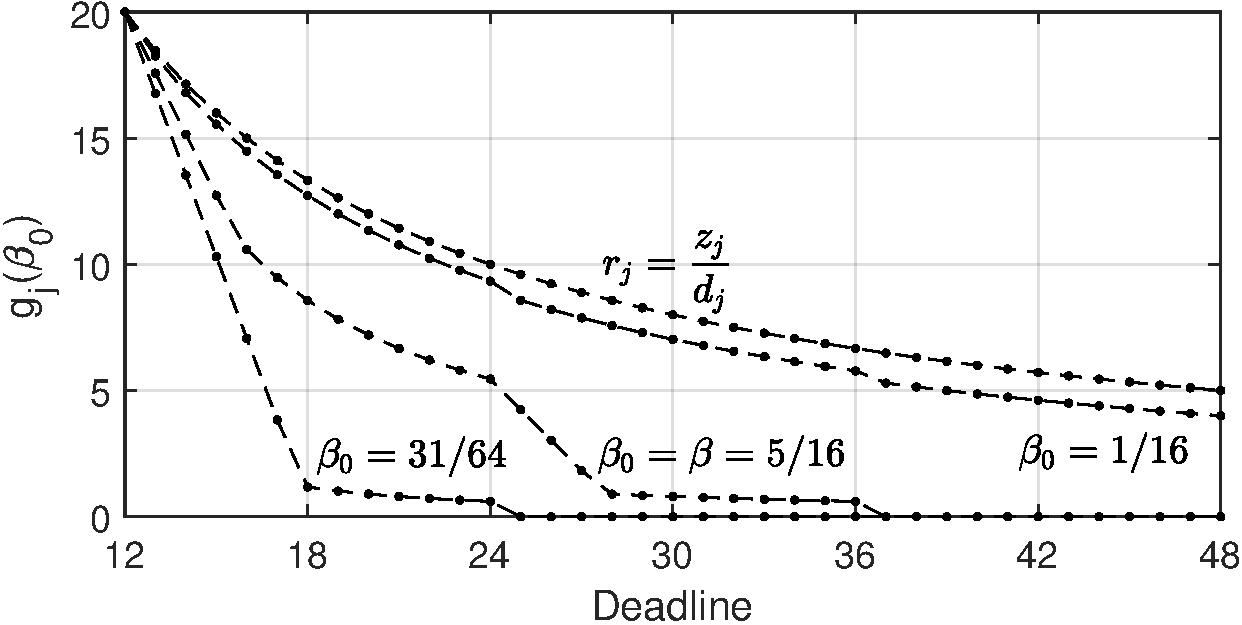}\\
	
	\caption{As the (relative) deadline $d_{j}$ increases from 12 to 48, the function $g_{j}(\beta_{0})$ decreases respectively under $\beta_{0}=\frac{31}{64}$, $\frac{5}{16}$, $\frac{1}{16}$, where $z_{j}=240$, $\delta_{j}=20$, and $Len=12$.}\label{Fig.0031}
\end{figure}

The allocations of self-owned instances to all jobs are based on the same function (\ref{policy-long-term}) whose value depends on a single parameter $\beta_{0}$. Together with Properties~\ref{proper-1} and~\ref{proper-2}, the power of the proposed policy can be achieved by setting $\beta_{0}$ to a value properly small in $[0, 1)$. Now, we explain this.

\vspace{0.25em}\noindent\textbf{High Utilization.} As illustrated in Fig.~\ref{Fig.00310}, the function $g_{j}(x)$ is non-increasing; no matter how many self-owned instances a user possesses, a high utilization of them is achieved after
\begin{itemize}
\item we set $\beta_{0}$ to a small enough value in $[0, 1)$.
\end{itemize}
This is because every arriving job will be assigned a large number of self-owned instances when $\beta_{0}$ is small, as illustrated in Fig.~\ref{Fig.0031}.

\vspace{0.25em}\noindent\textbf{Fair Allocation.} Fair allocation means that the allocations of self-owned instances among jobs need to be balanced according to their capabilities of utilizing spot instances. Fair allocation avoids ignoring the difference among jobs and treating them equally where a policy like (\ref{intuitive}) is used; together with Property \ref{proper-1}, the latter can lead to that "rich" jobs (i.e., jobs with strong capabilities where $g_{j}(\beta)$ is small) are consuming unnecessary self-owned instances, i.e.,
\begin{itemize}
\item $r_{j}>g_{j}(\beta)$, where $r_{j}$ denotes the number of self-owned instances allocated to a job; the job's remaining $z_{j}-r_{j}\cdot d_{j}$ workload is expected to be processed by spot instances alone;
\end{itemize}
whereas the others (with large $g_{j}(\beta)$) are allocated poorly and still starving for more self-owned instances, i.e.,
\begin{itemize}
\item $r_{j}<g_{j}(\beta)$; here, on-demand instances are expected to be consumed.
\end{itemize}
Indiscriminate allocations of instances to jobs do harm to the process of achieving the capacity that jobs have for utilizing spot instances, causing unnecessary consumption of more on-demand instances and a higher cost of completing all jobs. In particular, for every rich job, only $g_{j}(\beta)$ self-owned instances are needed to complete its remaining workload without on-demand instances; the saved $r_{j}-g_{j}(\beta)$ self-owned instances can be used for those poorly allocated jobs so as to reduce their consumption of on-demand instances, which improves the cost-efficiency of instance utilization.

Now, we explain that the proposed policy achieves fair allocation by properly setting the value of $\beta_{0}$. The cost-optimal $\beta_{0}$, denoted by $\beta_{0}^{*}$, depends on the statistics of jobs and the amount of self-owned instances available; the online learning technique will be used subsequently in Section~\ref{sec.learning} to infer $\beta_{0}^{*}$. When $\beta_{0}^{*}=0$, self-owned instances themselves are enough to complete all jobs by their deadlines where $g_{j}(\beta_{0})=\frac{z_{j}}{d_{j}}$.

When there are adequate self-owned instances such that $\beta_{0}^{*}\in (0, \beta]$, every arriving job $j$ will be allocated $\geq g_{j}(\beta)$ self-owned instances whenever the amount of idle self-owned is large (i.e., $m_{a_{j}}(a_{j}+d_{j}-1)\geq g_{j}(\beta_{0})$), according to the policy (\ref{policy-long-term}); this is illustrated in Fig.~\ref{Fig.0031} where $\beta=\frac{5}{16}$ and $\beta_{0}^{*}=\frac{1}{16}$. Afterwards, the job $j$ is expected to be completed by utilizing spot instances alone. No job will be allocated $<g_{j}(\beta)$ self-owned instances whenever possible, and fair allocation is achieved. Furthermore, the arriving jobs are allocated on a first come first served basis and we note that $\beta_{0}$ should be properly small but cannot be set to a value too small. If $\beta_{0}$ is too small, jobs that arrive earlier might consume too many self-owned instances and then the jobs that arrive late have less opportunity to get $\geq g_{j}(\beta)$ self-owned instances subject to the availability of these instances (i.e., the value of $m_{a_{j}}(a_{j}+d_{j}-1)$).

When self-owned instances are deficient such that $\beta_{0}^{*}\in (\beta, 1)$, every arriving job will be allocated $< g_{j}(\beta)$ self-owned instances; this is illustrated in Fig.~\ref{Fig.0031} where $\beta=\frac{5}{16}$ and $\beta_{0}^{*}=\frac{31}{64}$. Afterwards, the job $j$ is expected to have to utilizing some amount of on-demand instances to meet its deadline. No job will be allocated $>g_{j}(\beta)$ self-owned instances, achieving fair allocation among jobs: if there exists such allocation, a waste of self-owned instances is caused since we can reduce this allocation to $g_{j}(\beta)$ and allocate these saved instances to other jobs to reduce the consumption of on-demand instances.

\subsection{Spot and On-demand Instances}
\label{sec.spot}

As described in Section~\ref{sec.on-demand-spot}, the instance allocation process is divided into two phases. Now, we analyze the expected optimal strategy to utilize spot instances.

\subsubsection{First phase}
\label{sec.spot-first-phase}

In the first phase, the allocation of $j$ is updated per hour and there is flexibility for $j$ to utilize spot instances. Now, we analyze the expected optimal policy in the first phase. One of the following two cases will happen: (\rmnum{1}) the job $j$ is completed in the first phase, and (\rmnum{2}) in the $i_{j}$-th execution of $j$, after spot instances are terminated by the cloud, there is no flexibility for $j$ to utilize spot instances.

In this paper, the value of $\beta$ is inferred by the online learning technique. If the previous allocation of self-owned instances $r_{j}$ is $\geq g_{j}(\beta)$, it is expected that the first case will happen; then, by Property~\ref{proper-1}, we conclude that
\begin{proposition}\label{proposi-spot-1}
An expected optimal strategy is to bid for $\delta_{j}-r_{j}$ spot instances at every allocation of $j$.
\end{proposition}

\begin{figure*}[t]
	\centering

	\includegraphics[width=6.4in]{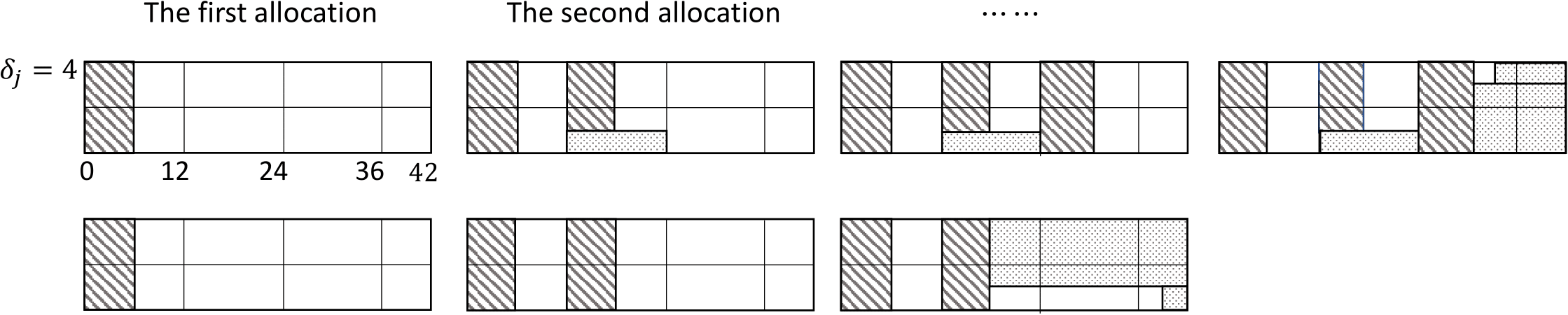}

	\caption{Illustration of Proposition~\ref{proposi-spot} in the case that $\nu(z_{j}, d_{j})<(\kappa_{0}-1)\cdot\delta_{j}$ and $\kappa_{2}(z_{j}, d_{j}) < \kappa_{3}$.}\label{Fig.032}
\end{figure*}

Next, we analyze the optimal strategy when the second case happens. Job $j$ is allocated $r_{j}$ self-owned instances at every $t\in[a_{j}, a_{j}+d_{j}-1]$; afterwards, it can be viewed as a new job with a workload $z_{j}-\delta_{j}\cdot d_{j}$ and a parallelism bound $\delta_{j}-r_{j}$, as described in Section~\ref{sec.self-owned}. So, without loss of generality, we just analyze the optimal strategy in the case where a job $j$ is completed by utilizing on-demand and spot instances alone.

Our decision variables are $o_{j}^{i}$ and $si_{j}^{i}$ where $o_{j}^{i}+si_{j}^{i}=\delta_{j}$. Let $\kappa_{1}$ denote the total number of allocation updates in the first phase where $j$ has flexibility for spot instances; let $\kappa_{0}=\left\lceil d_{j}/Len \right\rceil$ denoting the maximum possible number of allocation updates of $j$ and we have
\begin{align}\label{time-constraint}
\kappa_{1}\leq \kappa_{0}.
\end{align}
In the $i$-th execution of $j$ where $i\in [1, \kappa_{1}]$, it is expected that the workloads processed by spot and on-demand instances are respectively $(\delta_{j}-o_{j}^{i})\cdot Len\cdot \beta$ and $o_{j}^{i}\cdot Len$. By Definition~\ref{def-2}, $j$ has flexibility to utilize unstable spot instances at the $\kappa_{1}$-th allocation, i.e.,
\begin{center}
$s_{j}^{\kappa_{1}} = \frac{\delta_{j}\cdot (d_{j}-(\kappa_{1}-1)\cdot Len)}{z_{j}-\sum\nolimits_{i=1}^{\kappa_{1}-1}{\left( o_{j}^{i}\cdot Len + (\delta_{j}-o_{j}^{i})\cdot Len\cdot\beta\right)}} \geq 1$,
\end{center}
and has no flexibility to utilize spot instances at the $(\kappa_{1}+1)$-th allocation, i.e.,
\begin{center}
$s_{j}^{\kappa_{1}+1} = \frac{\delta_{j}\cdot (d_{j}-\kappa_{1}\cdot Len)}{z_{j}-\sum\nolimits_{i=1}^{\kappa_{1}}{\left(o_{j}^{i}\cdot Len + (\delta_{j}-o_{j}^{i})\cdot Len\cdot\beta\right)}} < 1$.
\end{center}
They are respectively equivalent to the following relations:
\begin{align}
& \sum\nolimits_{i=1}^{\kappa_{1}-1}{(\delta_{j}-o_{j}^{i})\cdot Len\cdot (1-\beta)} \leq d_{j}\cdot\delta_{j}-z_{j},\label{optimal-spot-constraints1}\\
& \sum\nolimits_{i=1}^{\kappa_{1}}{(\delta_{j}-o_{j}^{i})}\cdot Len\cdot (1-\beta) > d_{j}\cdot\delta_{j}-z_{j}. \label{optimal-spot-constraints2}
\end{align}
For the condition that $s_{j}^{\kappa_{1}+1}<1$, a special case is $\kappa_{1}=\kappa_{0}$ where this condition holds trivially since $d_{j}-\kappa_{1}\cdot Len\leq 0$; since $s_{j}^{\kappa_{1}}\geq 1$, the $\kappa_{1}$-th allocation of $j$ is still in the first phase. In this subsection, our objective is to maximize the total workload processed by spot instances at the first $\kappa_{1}$ allocations, i.e.,
\begin{align} \label{optimal-spot}
\text{maximize}\enskip \sum\nolimits_{i=1}^{\kappa_{1}}{(\delta_{j}-o_{j}^{i})\cdot Len\cdot \beta},
\end{align}
subject to the constraints (\ref{time-constraint}), (\ref{optimal-spot-constraints1}), (\ref{optimal-spot-constraints2}), and the constraint that $o_{j}^{i}$ is an integer in $[0, \delta_{j}]$. Our decision variables are $o_{j}^{1}, \cdots, o_{j}^{\kappa_{1}}$.

Now, we give an optimal solution to (\ref{optimal-spot}).
\begin{proposition}\label{proposi-spot-optimal}
An solution to (\ref{optimal-spot}) is optimal if it is of the following form: (\rmnum{1}) $\sum_{i=1}^{\kappa_{1}-1}{(\delta_{j}-o_{j}^{i})}=\min\{ \nu(z_{j}, d_{j}), (\kappa_{0}-1)\cdot\delta_{j}\}$, and (\rmnum{2}) $o_{j}^{\kappa_{1}}=0$,
where
\begin{center}
$\nu(z_{j}, d_{j})  =  \left\lfloor \frac{d_{j} \cdot \delta_{j} - z_{j}}{Len\cdot (1-\beta)} \right\rfloor$.
\end{center}
\end{proposition}

Proposition~\ref{proposi-spot-optimal} indicates the maximum number of spot instances that can be bid for in the first phase, i.e., the maximum value of $\sum\nolimits_{i=1}^{\kappa_{1}}{(\delta_{j}-o_{j}^{i})}$. As a corollary of Proposition~\ref{proposi-spot-optimal}, we conclude that
\begin{proposition}\label{invariable}
Given a job $j$, the expected maximum workload that can be processed by spot instances is
\begin{center}
$\left( \min\left\{ \nu(z_{j}, d_{j}), (\kappa_{0}-1)\cdot\delta_{j}\right\} + \delta_{j} \right)\cdot Len\cdot \beta$.
\end{center}
\end{proposition}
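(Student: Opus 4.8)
The plan is to show that Proposition~\ref{invariable} is essentially a bookkeeping corollary of the optimal solution to the program~(\ref{optimal-spot}) that has already been identified in the text. First I would recall that, by the argument preceding Proposition~\ref{proposi-spot}, the optimal feasible solution of~(\ref{optimal-spot}) has the value of $\sum_{i=1}^{\kappa_{1}-1}(\delta_{j}-o_{j}^{i})$ equal to $\nu(z_{j},d_{j})$, while the last term satisfies $\delta_{j}-o_{j}^{\kappa_{1}}=\delta_{j}$. Hence the total number of spot instances bid for over all $\kappa_{1}$ allocation updates is $\nu(z_{j},d_{j})+\delta_{j}$. Since, at each allocation update at which $j$ has the flexibility to utilize spot instances, the expected time for which spot instances run is $\beta\cdot Len$ slots, the expected workload processed by spot instances equals the number of spot-instance-slots bid for, multiplied by $\beta$; this gives exactly $\bigl(\nu(z_{j},d_{j})+\delta_{j}\bigr)\cdot Len\cdot\beta$, which is the claimed expression.

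The key steps, in order, are therefore: (i) invoke Proposition~\ref{proposi-spot} (equivalently, the optimality analysis of~(\ref{optimal-spot})) to fix $\sum_{i=1}^{\kappa_{1}-1}(\delta_{j}-o_{j}^{i})=\nu(z_{j},d_{j})$ and $\delta_{j}-o_{j}^{\kappa_{1}}=\delta_{j}$; (ii) sum these to get the total spot-instance count $\nu(z_{j},d_{j})+\delta_{j}$ over the $\kappa_{1}$ updates with spot flexibility; (iii) multiply by $Len\cdot\beta$, using the modelling assumption that the expected spot execution time per update is $\beta\cdot Len$ slots and that the workload processed on spot instances at update $i$ is $(\delta_{j}-o_{j}^{i})\cdot Len\cdot\beta$; (iv) observe that no workload is processed by spot instances after the $\kappa_{1}$-th update (by Definition~\ref{def-2} and the definition of $\kappa_{1}$), so this sum is indeed the \emph{maximum} expected spot workload over the whole execution of $j$. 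One should also note, for case~(\rmnum{2}.a) of Proposition~\ref{proposi-spot} where $\kappa_{2}(z_{j},d_{j})\cdot\delta_{j}=\nu(z_{j},d_{j})$, that $\kappa_{1}=\kappa_{2}(z_{j},d_{j})+1$ and the count is still $\nu(z_{j},d_{j})+\delta_{j}$, and likewise in case~(\rmnum{2}.b); so the formula is uniform across both cases.

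I expect the only genuine subtlety — rather than a true obstacle — to be the careful identification of the first $\kappa_{1}-1$ updates with "the updates where $\delta_{j}-o_{j}^{i}$ is chosen freely subject to~(\ref{optimal-spot-constraints1})" and the last update $\kappa_{1}$ with "the update where $\delta_{j}-o_{j}^{\kappa_{1}}=\delta_{j}$ can be pushed to the maximum without violating~(\ref{optimal-spot-constraints2})". Once that correspondence is in place, the rest is a one-line summation. A secondary point worth a sentence is that the floor in the definition~(\ref{exp-nu}) of $\nu(z_{j},d_{j})$ is exactly what makes the chosen solution feasible for~(\ref{optimal-spot-constraints1}) while adding one more spot instance would violate it; this is already implicit in the preceding discussion, so the proof can simply cite it. The statement would then close by remarking, as the surrounding text already does, that for fixed $\beta$ this maximum depends on $j$ only through $\delta_{j}$ and $\delta_{j}\cdot d_{j}-z_{j}$, which is the property used in the analysis of Principle~\ref{prin-one}.
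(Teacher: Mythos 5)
Your proposal is correct and takes essentially the same route as the paper: the paper obtains Proposition~\ref{invariable} directly as the objective value of the optimal solution to (\ref{optimal-spot}), with $\sum_{i=1}^{\kappa_{1}-1}(\delta_{j}-o_{j}^{i})=\nu(z_{j},d_{j})$ and $\delta_{j}-o_{j}^{\kappa_{1}}=\delta_{j}$, each spot instance expected to run $\beta\cdot Len$ slots. Your steps (i)--(iv), including the cross-check of cases (\rmnum{2}.a) and (\rmnum{2}.b) of Proposition~\ref{proposi-spot}, match that reasoning.
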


Proposition~\ref{proposi-spot-optimal} also implies an expected optimal strategy for spot instances.

\begin{proposition}	\label{proposi-spot}
	Let $\kappa_{2}(z_{j}, d_{j}) = \lfloor \frac{\nu(z_{j}, d_{j})}{\delta_{j}} \rfloor$ and $\kappa_{3}=\frac{\nu(z_{j}, d_{j})}{\delta_{j}}$. To maximize the total workload processed by spot instances, if $(\kappa_{0}-1)\cdot\delta_{j}\leq \nu(z_{j}, d_{j})$,  we can set $\kappa_{1}=\kappa_{0}$ and {\em an expected optimal strategy} is to
\begin{itemize}
\item bid for $\delta_{j}$ spot instances at each allocation update of $j$.
\end{itemize}
If $\nu(z_{j}, d_{j})< (\kappa_{0}-1)\cdot\delta_{j}$, in the case that $\kappa_{2}(z_{j}, d_{j}) = \kappa_{3}$, we can set $\kappa_{1}=\kappa_{2}(z_{j}, d_{j})+1$ and {\em an expected optimal strategy} is to
\begin{itemize}
\item bid for $\delta_{j}$ spot instances at each of the first $\kappa_{1}$ allocations of $j$, i.e., $o_{j}^{1}=\cdots=o_{j}^{\kappa_{1}}=\delta_{j}$;
\end{itemize}
in the case that $\kappa_{2}(z_{j}, d_{j}) < \kappa_{3}$, we can set $\kappa_{1}=\kappa_{2}(z_{j}, d_{j})+2$ and {\em an expected optimal strategy} is to
	\begin{itemize}
		\item bid for $\delta_{j}$ spot instances at the $1$st, $\cdots$, ($\kappa_{1}-2$)-th, $\kappa_{1}$-th allocations of $j$, i.e., $o_{j}^{1}=\cdots=o_{j}^{\kappa_{1}-2}=o_{j}^{\kappa_{1}}=\delta_{j}$,

		\item bid for $\nu(z_{j}, d_{j}) - \kappa_{2}(z_{j}, d_{j})\cdot \delta_{j}$ spot instances at the ($\kappa_{1}-1$)-th allocation of $j$, i.e., $o_{j}^{\kappa_{1}-1}=\nu(z_{j}, d_{j}) - \kappa_{2}(z_{j}, d_{j})\cdot \delta_{j}$.
	\end{itemize}
\end{proposition}


We illustrate Proposition~\ref{proposi-spot} in Fig.~\ref{Fig.032} where the area of diagonal stripes and the dotted area denote the workload processed respectively by spot and on-demand instances; in the blank area, no workload of $j$ is processed. We assume that $\beta=\frac{1}{2}$ and $L=5$ where $Len=12$; job $j$ has $d_{j}=42$ (3.5 hours), $z_{j}=122$ and $\delta_{j}=4$. Here, we have $\nu(z_{j}, d_{j})=7$ and $\kappa_{2}(z_{j}, d_{j})=1$. From the left to the right, the first four subfigures illustrate the expected optimal allocation. At the first allocation of $j$, $\delta_{j}=4$ spot instances are bid for and the expected execution time of spot instances is $\beta\cdot Len = 6$. At the second allocation of $j$, $(\nu(z_{j}, d_{j})-\delta_{j}\cdot \kappa_{2}(z_{j}, d_{j}))=3$ spot instances are bid for and one on-demand instance is purchased. So far, $\nu(z_{j}, d_{j})=7$ spot instances have been bid for. At the third allocation of $j$, $\delta_{j}$ spot instances are bid for and after the execution of spot instances, $j$ has no flexibility to utilize spot instances and it turns to totally utilize on-demand instances as illustrated by the fourth subfigure. In contrast, we also use the last three subfigures to illustrate {\em an intuitive way} to utilize spot instances where $\delta_{j}$ instances are bid for at every allocation of $j$ when there is flexibility for spot instances. After the execution of spot instances at the second allocation of $j$, it has no flexibility and turns to utilize on-demand instances since $s_{j}^{3}<1$.

As illustrated in Fig.~\ref{Fig.032}, the strategy in Proposition~4.6 can be explained as follows. Whenever possible, bid for the maximum number of spot instances (i.e., $\delta_{j}$ instances). An exception happens only at the second allocation of $j$ where little flexibility is remaining, and we need to properly manage the instance allocation to ensure that there still exists flexibility to utilize spot instances at the third allocation of $j$: then, if $\delta_{j}$ spot instances are bid for at the second allocation, it is expected that there will be no flexibility for $j$ to utilize spot instances at the third allocation; by bidding for less, it could be expected that the allocation will not get into the second phase and there will still be the last flexibility/opportunity at the third allocation of $j$.

Based on Proposition~\ref{proposi-spot}, we propose Algorithm~\ref{proportion} to dynamically determine the numbers of on-demand and spot instances allocated to $j$ at every $i$-th allocation update when there is flexibility to utilize spot instances. At every allocation of $j$ that occurs at slot $t$, the remaining workload of $j$ to be processed could be viewed as a new job with the arrival time $t$, workload $z_{j}^{\prime}$, parallelism bound $\delta_{j}$, and relative deadline $a_{j}+d_{j}-t$; we always use Proposition~\ref{proposi-spot} to determine the first allocation of this new job whose arrival time is $t$.

\begin{algorithm}[!ht]
	\SetKwInOut{Input}{Input}
	\SetKwInOut{Output}{Output}	
	
	\BlankLine	
	
\tcc{\footnotesize{At the $i$-th allocation of $j$, its remaining workload is viewed as a new job with an arrival time $t$, and a relative deadline $a_{j}+d_{j}-t$}}

$\kappa_{0}(t)\leftarrow \left\lceil \frac{a_{j}+d_{j}-t}{Len} \right\rceil$

\tcc{\footnotesize{the case $(\kappa_{0}-1)\cdot\delta_{j}\leq \nu(z_{j}, d_{j})$ in Proposition~\ref{proposi-spot}}}
\If{$(\kappa_{0}(t)-1)\cdot\delta_{j} \leq \nu(z_{j}, a_{j}+d_{j}-t)$}{

	$si_{j}^{i}\leftarrow\delta_{j}$, \enskip $o_{j}^{i}\leftarrow 0$\;
}
\Else{	

\tcc{\footnotesize{both cases $\kappa_{2}(z_{j}, d_{j}) = \kappa_{3}$ and $\kappa_{2}(z_{j}, d_{j}) < \kappa_{3}$ where $\kappa_{2}(z_{j}, d_{j})\geq 1$}}
	\If{$\kappa_{2}(z_{j}^{\prime}, a_{j}+d_{j}-t)\geq 1$}{
		
		$si_{j}^{i}\leftarrow\delta_{j}$, \enskip $o_{j}^{i}\leftarrow 0$\;
		
	}

\tcc{\footnotesize{the case $\kappa_{2}(z_{j}, d_{j}) < \kappa_{3}$ where $\kappa_{2}(z_{j}, d_{j}) = 0$}}
	\If{$\kappa_{2}(z_{j}^{\prime}, a_{j}+d_{j}-t) = 0$ $\wedge$ $\nu(z_{j}, a_{j}+d_{j}-t)>0$}{
		
		$si_{j}^{i}\leftarrow \nu(z_{j}, a_{j}+d_{j}-t)$, \enskip $o_{j}^{i}\leftarrow \delta_{j} - si_{j}^{i}$\;
		
	}

\tcc{\footnotesize{the case $\kappa_{2}(z_{j}, d_{j}) = \kappa_{3}=0$}}
	\If{$\nu(z_{j}, a_{j}+d_{j}-t)=0$}{
			
	    $si_{j}^{i}\leftarrow \delta_{j}$, \enskip $o_{j}^{i}\leftarrow 0$\;
	
    }	
	
}
	
	$b_{j}^{i}\leftarrow b$\;
	
	at the $i$-th allocation, bid a price $b_{j}^{i}$ for $si_{j}^{i}$ spot instances\;

	\caption{Proportion($j$, $\beta$, $b$)\label{proportion}}
\end{algorithm}

\subsubsection{Second phase}
\label{sec.spot-second-phase}

As described in Section~\ref{sec.on-demand-spot}, once spot instances get lost at every allocation of $j$, the scheduler uses Definition~\ref{def-2} to check the flexibility to utilize spot instances. In the $i_{j}$-th execution, when spot instances get lost at the beginning of some slot $t_{1}^{\prime}$, there is no such flexibility; then, the instance allocation enters the second phase where only on-demand instances are utilized. Now, we analyze their optimal utilization.

\begin{figure}[!ht]
	\centering
	\includegraphics[width=3.1in]{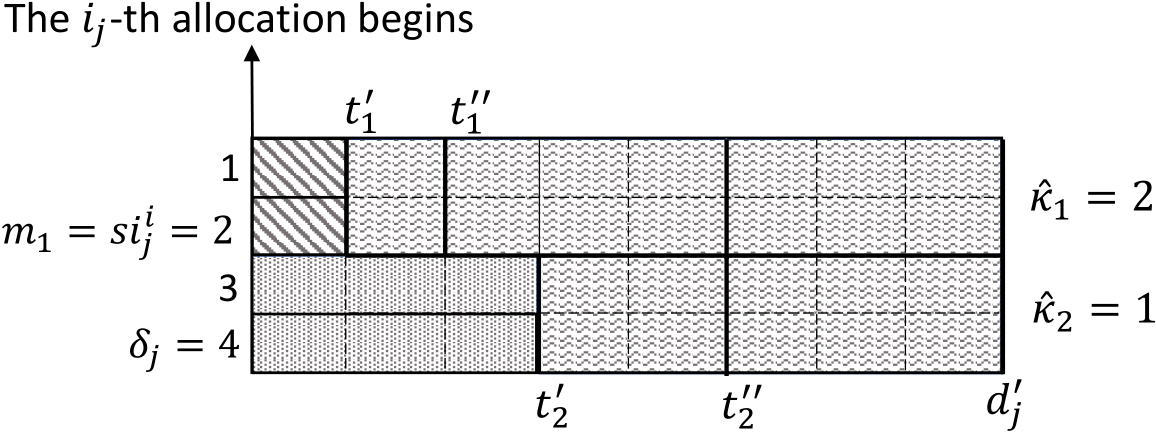}\\
	\caption{The second phase of allocation where $i=i_{j}$: the area of waves denotes the available space in the second phase; the area of diagonal stripes and the dotted area respectively denote the workload processed in the $i_{j}$-th execution by spot instances and on-demand instances that are utilized for an hour.}\label{Fig.422}
\end{figure}

As shown in Algorithm~\ref{Slicing}, at every allocation of $j$ in the first phase (including the $i_{j}$-th allocation), the number of on-demand instances allocated to $j$ is either 0 (see lines 3, 6, 10) or $>0$ (see line 8). Let $t_{2}^{\prime}=a_{j}+i_{j}\cdot Len$, $d_{j}^{\prime}=a_{j}+d_{j}-1$, and we define two parameters that represent the maximum multiple of an hour (containing $Len$ slots) respectively in time intervals $[t_{1}^{\prime}, d_{j}^{\prime}]$ and $[t_{2}^{\prime}, d_{j}^{\prime}]$:
\begin{center}
$\hat{\kappa}_{1}=\left\lfloor \frac{d_{j}^{\prime}-(t_{1}^{\prime}-1)}{Len} \right\rfloor$, and $\hat{\kappa}_{2}=\left\lfloor \frac{d_{j}^{\prime}-t_{2}^{\prime}+1}{Len} \right\rfloor$;
\end{center}
Let $t_{i}^{\prime\prime}=d_{j}^{\prime}-\hat{\kappa}_{i}\cdot Len+1$ ($i\in \{1, 2\}$), and after deducting $\hat{\kappa}_{1}$ and $\hat{\kappa}_{2}$ hours respectively from the two intervals, the numbers of remaining slots in $[t_{1}^{\prime}, t_{1}^{\prime\prime}-1]$ and $[t_{2}^{\prime}, t_{2}^{\prime\prime}-1]$ are denoted by $\phi_{1}$ and $\phi_{2}$:
\begin{center}
$\phi_{1} =t_{1}^{\prime\prime} - t_{1}^{\prime}$,\enskip and\enskip $\phi_{2} = t_{2}^{\prime\prime} - t_{2}^{\prime}$,
\end{center}
where $0\leq \phi_{1}, \phi_{2}<Len$. The related notation is also illustrated in Fig.~\ref{Fig.422}. Let
\begin{center}
$m_{0}=si_{j}^{i}\cdot\hat{\kappa}_{1}+o_{j}^{i}\cdot\hat{\kappa}_{2}$,\,\, $m_{1}=si_{j}^{i}$,\,\, and $m_{2}=o_{j}^{i}$,
\end{center}
where $i=i_{j}$. In Fig.~\ref{Fig.422}, the available space in the second phase is the area of waves and $m_{0}$ represents the maximum integer of instance hour that can be utilized by $j$.

Since every on-demand instance is charged on an hourly basis, a cost-optimal strategy in the second phase is to minimize the integer instance hours (i.e., the number of on-demand instances $\times$ the time for which they are utilized). The following conclusion possibly is intuitive although a formal proof is also provided: whenever an instance is purchased for an hour, it should be utilized as long as possible with the space constraint.
\begin{proposition}\label{opti-1}
Let $y=y_{0}+y_{1}+y_{2}$ be the minimum such that $y_{0}\cdot Len + y_{1}\cdot\phi_{1} + y_{2}\cdot\phi_{2} \geq z_{j}^{i_{j}+1}$ subject to $y_{0}$, $y_{1}$, $y_{2}$ are non-negative integers and $y_{0}\in [0, m_{0}]$,\, $y_{1}\in [0, m_{1}]$,\,\, $y_{2}\in [0, m_{2}]$. In the second phase, a cost-optimal strategy is to purchase on-demand instances for $y$ instance hours\footnote{The specific value of $y$ is given while proving this proposition, which can be found in the appendix.}.
\end{proposition}

\subsection{Scheduling Framework}
\label{sec.framework}

As described above, a general policy is defined by a tuple $\{\beta_{0}, \beta, b\}$ and determines the amounts of self-owned, spot, and on-demand instances allocated to a job, and the bid price. The instance allocation process has been described in Section~\ref{sec.rules}. Based on this, at every slot $t$, if a job $j$ just arrives or it has arrived before but not been completed yet, we propose a framework, presented in Algorithm~\ref{Slicing}, to determine the action of allocating instances to $j$ after checking the state of $j$. Actions are needed in the following three states: (\rmnum{1}) $t$ is the arrival time of $j$, determining the allocation of self-owned instances, (\rmnum{2}), $t$ equals $a_{j}+ (i-1)\cdot Len$ where the $i$-th allocation update of spot and on-demand instances needs to be done, (\rmnum{3}) the spot instances of $j$ get lost at $t$ where we need to check whether $j$ still has flexibility for spot instances. In Algorithm~\ref{Slicing}, $z_{j}^{\prime}$ denotes the remaining workload of $j$ to be processed after deducting its current allocations from $z_{j}$; upon arrival of $j$, $z_{j}^{\prime}=z_{j}$.

\subsection{The Application of Online Learning}
\label{sec.learning}

Whenever a job $j$ arrives, the instance allocation process at every slot $t\in [a_{j}, a_{j}+d_{j}-1]$ is regulated by Algorithm~\ref{Slicing} and the specific amounts of various instances allocated to $j$ is determined by $\beta_{0}, \beta, b$. To learn the most cost-effective parameters, we apply the online learning algorithm (TOLA) in \cite{Jain14}. We present here its main idea; a formal description can be found in the appendix.

There are a set of jobs $\mathcal{J}$ that arrive over time, indexed by $j=1, 2, \cdots$, and a set of $n$ parametric policies $\mathcal{P}$ each specified by $\{\beta_{0}, \beta, b\}$ and indexed by $\pi=1, 2, \cdots$. Let $d=\max_{j\in\mathcal{J}}\{d_{j}\}$, i.e., the maximum relative deadline of all jobs, and $\mathcal{J}_{t}\subseteq \mathcal{J}$ denote all jobs $j$ arriving at slot $t$, i.e., $a_{j}=t$. There is a weight distribution $w$ over $n$ policies whose initial value is $\{1/n, \cdots, 1/n\}$.
Time goes from slot $1$ towards later slots. At every slot $t$, TOLA randomly picks for a job $j\in\mathcal{J}_{t}$ a policy $\pi_{j}$ from $\mathcal{P}$ according to the current $w$ and bases the allocation of instances to $j$ on that policy. In the meantime, the distribution $w$ will also be updated at every $t>d$. At such $t$, we have the knowledge of spot prices in $[t-d, t-1]$ and can derive the cost of completing a job $j^{\prime}\in\mathcal{J}_{t-d}$ under every policy $\pi\in\mathcal{P}$; the distribution is updated such that the lower-cost (higher-cost) polices of this job are re-assigned the enlarged (resp. reduced) weights. Thus, as time goes by and more and more jobs are processed, the most cost-effective policies of $\mathcal{P}$ will be identified gradually, i.e., the ones with the highest weights. When $t$ is large, TOLA will choose the most cost-effective policy for every arriving job and the actual cost of completing all jobs is close to the cost of completing all jobs under a specific policy $\pi^{*}\in\mathcal{P}$ that generates the lowest total cost.

\begin{algorithm}[!ht]
\SetKwInOut{Input}{Input}
\SetKwInOut{Output}{Output}

\Input{the job's current characteristics $\{a_{j}, d_{j}, z_{j}^{\prime}, \delta_{j}\}$ where $z_{j}^{\prime}$ is still $>0$, and a parameterized policy $\{\beta_{0}, \beta, b\}$}

\tcc{\footnotesize{allocate instances at the very beginning of slot $t$}}

    \If{$a_{j}=t$}{
		
        \tcp{\footnotesize{upon arrival of $j$, allocate self-owned instances to it}}
		
         set the value of $r_{j}$ using Equation~(\ref{policy-long-term})\;

        \For{$\overline{t}\leftarrow a_{j}$ \KwTo $a_{j}+d_{j}-1$}{
           $r_{j}(\overline{t}) \leftarrow r_{j}$\;
        } 		    				
	}

  $i\leftarrow \left\lfloor  \frac{t-a_{j}}{Len} \right\rfloor + 1$\tcp{\footnotesize{used to number the allocation update}}

  \vspace{0.15em}
  \If{$\frac{t-a_{j}}{Len} = i-1$}{

  \tcp{\footnotesize{at the $i$-th allocation of $j$ where it has flexibility for spot instances}}

        \If{$r_{j}\geq g_{j}(\beta)$}{

           \tcp{\footnotesize{it is expected that $j$ will be completed by utilizing spot instances alone after allocating self-owned instances}}

           apply the strategy in Proposition~\ref{proposi-spot-1} here\;

        }
        \Else{

           call Algorithm~\ref{proportion}\;

        }

  }

     \If{the spot instances of $j$ get lost at the beginning of slot $t$}{

     \If{$\frac{(\delta_{j}-r_{j})\cdot (d_{j}-Len\cdot i)}{z_{j}^{\prime}} < 1 $}{

        \tcp{\footnotesize{$j$ has no flexibility to utilize spot instances at the next allocation update by Definition~\ref{def-2}}}

        apply the strategy in Proposition~\ref{opti-1} here\;

     } \footnotesize{\tcp{otherwise, $j$ still has the flexibility at the next allocation update where $z_{j}^{\prime}=z_{j}^{i+1}$}}

}
\caption{Dynalloc($a_{j}, d_{j}, z_{j}^{\prime}, \delta_{j}, \beta_{0}, \beta, b, N, t$)}\label{Slicing}
\end{algorithm}

\subsection{Extension to Microsoft Azure Cloud}


Above, we are essentially studying the following question. On-demand instances are always available and charged a fixed unit price. The availability of spot instances is uncertain over time; intuitively, it is the probability that a user successfully gets spot instances and we denote by $\beta$ its average value. There is a fixed number of self-owned instances. The costs of utilizing self-owned, spot and on-demand instances are increasing. Our question is about the cost-effective strategy to utilize these instances. Our intuition is to maximize the utilization of self-owned instances; when they are not adequate for completing a job, we aim to minimize the utilization of costly on-demand while maximizing the utilization of spot instances. The availability $\beta$ and the job characteristics (deadline, workload, parallelism bound) determine the unique capability of each job $j$ to utilize spot instances, i.e., the maximum workload that could be processed by spot instances. We also give the minimum amount $r_{j}^{min}$ of self-owned instances that each job $j$ needs to complete itself without utilizing any costly on-demand instances. Based on this, related policies are proposed to allocate instances to jobs.

So far, this question has been addressed in the context of Amazon EC2 pricing. On-demand instances are charged on an hourly basis. The price of spot instances (i.e., spot price) fluctuates over time and is updated every 5 minutes. Every user bids a price for spot instances and only if its bid price is not below the current spot price, it could utilize the spot instances for at least 5 minutes. Spot users are charged according to the spot prices. Under such context, the availability of spot instances depends on the bid price of users and the spot prices.

Beyond Amazon EC2, Microsoft Azure began to offer low-priority VMs (virtual machines) since May 2017, as well as high-priority VMs \cite{Microsoft}; it is the second largest IaaS service provider and accounts for 13.3\% of the global market share in 2017 \cite{Gartner}. High- and low-priority VMs respectively correspond to on-demand and spot instances only with some difference in pricing. In Microsoft Azure, high-priority VMs are always available and charged a fixed price; also, low-priority VMs have a lower price but their availability varies over time. However, its pricing model simplifies the Amazon EC2 pricing in that (\rmnum{1}) high-priority VMs are charged per second instead of on an hourly basis, and (\rmnum{2}) the price of low-priority VMs is fixed but their availability is a system-level random variable, without depending on the bid price of users. So, we can roughly say that users will be billed for the exact period when VMs are utilized, without rounding up partial instance hour to full hour.

Now, we explain how to apply the framework of this paper to the Microsoft Azure scenario. Upon arrival of a job $j$, we can still use the policy (\ref{policy-long-term}) proposed in Section~\ref{sec.self-owned} for the allocation of self-owned instances: in the case that they are not adequate, the number of allocated instances is no larger than $r_{j}^{min}$; in the opposite case, the number is no smaller than $r_{j}^{min}$. After allocating self-owned instances, the job $j$ can be viewed as a new job with a reduced parallelism bound $\delta_{j}-r_{j}$. Unlike the case in Amazon EC2, it is not necessary to update the allocation of on-demand instances and bid a price for spot instances every hour. From its arrival on, job $j$ continuously attempts to utilize $\delta_{j}-r_{j}$ low-priority VMs; once there is no flexibility for $j$ to utilize such VMs at some moment, $j$ turns to utilize $\delta_{j}-r_{j}$ high-priority VMs until it is completed. In the allocation process above, only one parameter $\beta_{0}$ is needed to control the allocation of self-owned instances; in contrast, there are three parameters $\{\beta_{0}, \beta, b\}$ in the case of Amazon EC2. So, when the approach of online learning is applied here, only $\beta_{0}$ is needed to be learned.

\section{Evaluation}
\label{sec.evaluation}

The main aim of our evaluations is to show the effectiveness of the proposed policies of this paper.

\subsection{Simulation Setups}
\label{sec.parameters}

The on-demand price is $p=0.25$ per hour. We set $L$ to 5 (minutes) and all jobs have a parallelism bound of 20. Following \cite{Chen11a,Zheng16}, we generate the jobs as follows. The job's arrival is generated according to a poisson distribution with a mean of 2. The size $z_{j}$ of every job $j$ is set to $12\times 20\times x$ where $x$ follows a bounded Pareto distribution with a shape parameter $\epsilon = \frac{1}{1.01}$, a scale parameter $\sigma=\frac{1}{6.06}$ and a location parameter $\mu=\frac{1}{6}$; the maximum and minimum value of $x$ is set to 1 and 10. The job's relative deadline is generated as $x\cdot z_{j}/\delta_{j}$, where $x$ is uniformly distributed over $[1, x_{0}]$. $x$ represents the slackness of a job; it affects the jobs' capability to utilize spot instances as shown by Proposition~\ref{invariable}, and is a main factor that determines the performance. In this paper, we consider three types of jobs respectively with a small, medium, and large slackness: the 1st, 2nd, 3rd types of jobs respectively with $x_{0}=3, 7, 13$. Spot prices are updated every time slot and their values can follow an exponential distribution where its mean is set to 0.11 \cite{Zheng15}.

\vspace{0.15em}\noindent\textbf{Proposed Policies.} The policies of this paper are parameterized: $\beta$ and $b$ are used for determine the allocation of spot and on-demand instances (see lines 5-13 of Algorithm~\ref{Slicing}), and $\beta_{0}$ is for self-owned instances (see lines 1-4 of Algorithm~\ref{Slicing}). The parameter $\beta_{0}$ is chosen in $\mathcal{C}_{1}=\{\frac{i}{10} \mid 0\leq i\leq 6\}$. As illustrated in Fig.~\ref{Fig.00310}, for jobs with $x_{0}>1.25$, the amount of self-owned instances allocated to jobs can be effectively controlled by selecting a value $\leq 0.6$; for the others with little flexibility to utilize spot instances, they will be a large number of self-owned instances whenever possible to reduce the consumption of on-demand instances. The parameter $\beta$ is chosen from $\mathcal{C}_{2} = \{\frac{i}{10} \mid 0\leq i\leq 9 \}\cup\{0.9999\}$. The bid price $b$ is chosen in $\mathcal{B}=\{ b_{i}=0.13 + 0.03\cdot(i-1) \mid 1\leq i\leq 6\}$. When only spot and on-demand instances are considered, let $\boldsymbol{\mathcal{P}}=\{(\beta, b) \mid \beta\in \mathcal{C}_{2}, b\in\mathcal{B} \}$, representing all policies of this paper to be evaluated; when self-owned instances are also taken into account, let $\boldsymbol{\mathcal{P}}=\{(\beta, b, \beta_{0}) \mid \beta_{0}\in\mathcal{C}_{1}, \beta\in \mathcal{C}_{2}, b\in\mathcal{B} \}$.

\vspace{0.15em}\noindent\textbf{Compared Policies.} The policies of this paper are compared with (\rmnum{1}) the naive policy (\ref{intuitive}) for self-owned instances and (\rmnum{2}) the policy proposed in \cite{Jain14} only for spot and on-demand instances (see Algorithm~1 in \cite{Jain14}). The latter randomly selects a parameter $\theta\in\Theta=\{ \frac{i}{10} \mid 0\leq i\leq 10 \}$ for every job $j$: (\rmnum{1}) the user will bid a price $b$ for $\theta\cdot \delta_{j}$ spot instances and acquire $(1-\theta)\cdot \delta_{j}$ on-demand instances at every allocation update of $j$; (\rmnum{2}) it monitors at every slot $t$ whether there is a risk of not completing the job by its deadline if only $(1-\theta)\cdot \delta_{j}$ on-demand instances are utilized in the remaining slots; (\rmnum{3}) if such risk exists, there is no flexibility for utilizing spot instances and it turns to utilize $\min\left\{\delta_{j}, \left\lceil z_{j}^{i_{j}+1}/Len \right\rceil\right\}$ on-demand instances alone until $j$ is completed\footnote{In \cite{Jain14}, the workload of $j$ is measured in instance hours.}. Let $\boldsymbol{\mathcal{P}^{\prime}} = \{(\theta, b) \mid \theta\in \Theta, b\in\mathcal{B} \}$, representing all the policies of \cite{Jain14}.

\vspace{0.15em}\noindent\textbf{Performance Metric.} Let $\pi$ denote a policy in $\mathcal{P}$ or $\mathcal{P}^{\prime}$. Given a set of jobs $\mathcal{J}$ that arrive over time, our aim is to minimize the cost of completing all jobs in $\mathcal{J}$; and a main performance metric is {\em the average unit cost of processing jobs} when the $x_{2}$-th type of jobs are processed with $x_{1}$ self-owned instances available, i.e.,
\begin{itemize}
\item the ratio of the total cost of utilizing various instances to the processed workload of jobs, denoted by $\alpha_{x_{1}, x_{2}}$, where $\alpha_{x_{1}, x_{2}} = \sum_{j\in\mathcal{J}}{c_{j}(\pi)}/\sum_{j\in\mathcal{J}}{z_{j}}$.
\end{itemize}
When a policy in $\mathcal{P}$ or $\mathcal{P}^{\prime}$ is applied to process all jobs, we denote by $\alpha_{x_{1}, x_{2}}(\pi)$ the corresponding average unit cost of processing jobs. Against the unknown statistics of spot prices and job's characteristics, there are some policies in $\mathcal{P}$ or $\mathcal{P}^{\prime}$ that are the most cost-effective. We use $\alpha_{x_{1}, x_{2}}$ (resp. $\alpha_{x_{1}, x_{2}}^{\prime}$) to denote the minimum of the average unit costs of our policies (resp. the policies in \cite{Jain14} and defined by (\ref{intuitive})), where $x_{2}=1, 2$, e.g., $\alpha_{x_{1}, x_{2}}=\min\nolimits_{\pi\in\mathcal{P}}\{\alpha_{x_{1}, x_{2}}(\pi)\}$.

The performance of the intuitive policy (\ref{intuitive}) (for self-owned instances) and the existing policy in \cite{Jain14} (for spot and on-demand instances) are used as {\em the baseline} to measure the performance of the proposed policies; so, one performance indicator can be as follows:
\begin{center}
$\rho_{x_{1}, x_{2}} = 1 - \frac{\alpha_{x_{1}, x_{2}}}{\alpha_{x_{1}, x_{2}}^{\prime}}$;
\end{center}
it represents the performance improvement of the proposed policies $\mathcal{P}$ over the baseline, that is, the ratio in cost reduction. Moreover, in this paper, the online learning algorithm TOLA is run to actually select a policy for each arriving job. The selection is random according to a distribution that will be updated according to the cost of completing that job; after numerous jobs are processed, the policies that generate the lowest cost will be associated with the highest probability. In this case, we use $\overline{\alpha}_{x_{1}, x_{2}}(\mathcal{P})$ or $\overline{\alpha}_{x_{1}, x_{2}}(\mathcal{P}^{\prime})$ to denote the average unit cost of processing jobs when $\mathcal{P}$ or $\mathcal{P}^{\prime}$ is applied to TOLA. When online learning is applied, the performance indicator can be as follows:
\begin{center}
$\overline{\rho}_{x_{1}, x_{2}} = 1-\frac{\overline{\alpha}_{x_{1}, x_{2}}(\mathcal{P})}{\overline{\alpha}_{x_{1}, x_{2}}(\mathcal{P}^{\prime})}$;
\end{center}
it represents the ratio in cost reduction when online learning is applied.

\subsection{Results}

In the following, we give the results of simulations that are taken over about 60000 jobs, mainly listed in Tables~\ref{table-spot},~\ref{table-self-owned},~\ref{table-hybrid-use}, and~\ref{table-regret}. In our simulations, all fractional solutions will be rounded up to the nearest integers.

\vspace{0.25em}\noindent\textbf{Experiment 1.} We aim to evaluate the effectiveness of the proposed policies $\mathcal{P}$ for spot and on-demand instances alone by means of comparisons with the policies $\mathcal{P}^{\prime}$ in \cite{Jain14}, where $x_{1}=0$. The simulation results are listed in Table~\ref{table-spot} and show a noticeable cost reduction by up to 64.51\%.

\begin{table}[!ht]
	\centering
		\caption{Performance Improvements for Spot and On-Demand Instances}
	\begin{threeparttable}[b]

		\begin{tabular}{|C{2.4cm}| C{2.4cm} | C{2.4cm} |}   
			
			\hline
			                $\rho_{0, 1}$   & $\rho_{0, 2}$ & $\rho_{0,3}$  \\ \hline			
			                  58.87\%       &     60.84\%   &  64.51\%     \\ \hline	
		\end{tabular}
		\label{table-spot}
	\end{threeparttable}
\end{table}

There are a total of 66 policies in $\mathcal{P}$. In our simulations, every 11 policies are grouped together and they use the same bid price. We have in the same group of policies that the cost-optimal value of $\beta$ (denoted by $\beta^{*}$) is the same even under different types of jobs; the particular results are illustrated in Table~\ref{table-beta-value}. So, in the rest of our simulations, the effective range of $\beta$ will be defined in $\{0.5, 0.6, 0.7, 0.8, 0.9, 0.999999\}$, to which we reset the value of $\mathcal{C}_{2}$.

\begin{table}[!ht]
	\centering
		\caption{The Optimal $\beta$ under a Bid Price $b$ }
	\begin{threeparttable}[b]

		\begin{tabular}{|C{0.5cm}|C{0.6cm}|C{0.6cm}|C{0.6cm}|C{0.6cm}|C{1.05cm}|C{1.05cm}|}   
			
			\hline
			            $b  $   & 0.13   & 0.16  & 0.19 & 0.22 & 0.25 & 0.28  \\ \hline			
			           $\beta$    &  0.7  &  0.8  &  0.9    &  0.9  &  0.999999 & 0.999999     \\ \hline	
		\end{tabular}
		\label{table-beta-value}
	\end{threeparttable}
\end{table}

\vspace{0.25em}\noindent\textbf{Experiment 2.} We aim to evaluate the proposed policy for self-owned instances, compared with the naive policy in (\ref{intuitive}); here, the allocation of spot and on-demand instances will use the same policy $\mathcal{P}$ proposed in this paper. The simulation results are listed in Table~\ref{table-self-owned}, showing a noticeable cost reduction by up to 43.74\%.

\begin{table}[!ht]
	\centering
		\caption{Performance Improvement for Self-Owned Instances}
	\begin{threeparttable}[b]
		\begin{tabular}{|C{0.9cm}| C{1.35cm} |  C{1.35cm} |  C{1.35cm} |  C{1.35cm} |}
			\hline
			       &      $\rho_{200, x_{2}}$   & $\rho_{400, x_{2}}$   & $\rho_{600, x_{2}}$   & $\rho_{800, x_{2}}$  \\ \hline
       $x_{2}=1$   &   15.73\%  &  21.41\%  &  27.07\%  &  22.83\%   \\ \hline	
       $x_{2}=2$   &   27.25\%   &  39.59\%   &   34.04\%   &   17.85\%   \\ \hline
       $x_{2}=3$   &   33.05\%  &   34.41\%   &  43.74\%  &   31.88\%  \\ \hline
		\end{tabular}
		\label{table-self-owned}
	\end{threeparttable}
\end{table}

The utilizations of self-owned instances under different policies are illustrated in Fig.~\ref{Fig.3a}, where the dotted lines from top to down respectively represents the case where $x_{1}=200, 400, 600$ and 800; the particular results are given by the stars on the same dotted line. The allocation of self-owned instances are determined by the policy (\ref{policy-long-term}) or (\ref{intuitive}). Given a set of jobs, the utilization of self-owned instances under the policy (\ref{policy-long-term}) only depends on the parameter $\beta_{0}$ since their allocation is before and independent of the allocation of spot and on-demand instances. The intuitive policy (\ref{intuitive}) is a special form of the policy (\ref{policy-long-term}) when $\beta_{0}=0$. In the case that $x_{2}=2$, when $x_{1}=200, 400, 600, 800$, the minimum average unit cost is generated when $\beta=0.3$, 0.2, 0.2, 0.1 respectively; the corresponding utilizations are given in Table~\ref{table-instance-utilization-1}; the utilization of the intuitive policy (\ref{intuitive}) is illustrated in Table~\ref{table-instance-utilization-2}. We can see that, given a case of $x_{1}$ and $x_{2}$, the proposed policy achieves a lower utilization than the intuitive policy; even so, it still achieves a lower average unit cost as shown in Table~\ref{table-self-owned} where $x_{2}=2$. This is because the proposed policy could effectively reduce the unnecessary consumption of on-demand instances as explained in Section~\ref{sec.explanation}.

\begin{table}[!ht]
	\centering
	\begin{threeparttable}[b]
		\caption{The Instance Utilization of the Proposed Policy under Cost-Optimal $\beta_{0}$}
		\begin{tabular}{| C{1.35cm} | C{1.15cm} |  C{1.15cm} |  C{1.15cm} |  C{1.15cm} |}
			\hline
     $(\beta_{0}, x_{1})$  &   (0.3, 200)   & (0.2, 400)  &   (0.2, 600)  & (0.1, 800)  \\ \hline

      Utilization &  89.89\%   &  92.41\%   &   72.70\%   &   96.39\%   \\ \hline
       		\end{tabular}
		\label{table-instance-utilization-1}
	\end{threeparttable}
	\begin{threeparttable}[b]
		\caption{The Instance Utilization of the Intuitive Policy}
		\begin{tabular}{| C{1.55cm} | C{1.1cm} |  C{1.1cm} |  C{1.1cm} |  C{1cm} |}
			\hline
     $x_{1}$  &   200   &  400  &  600  &  800  \\ \hline

     Utilization  &  99.73\%  &  99.57\%  &  99.31\%  &  98.89\%   \\ \hline

       		\end{tabular}
		\label{table-instance-utilization-2}
	\end{threeparttable}
\end{table}

\begin{figure}[!ht]
  \centering
	\includegraphics[width=3.30in]{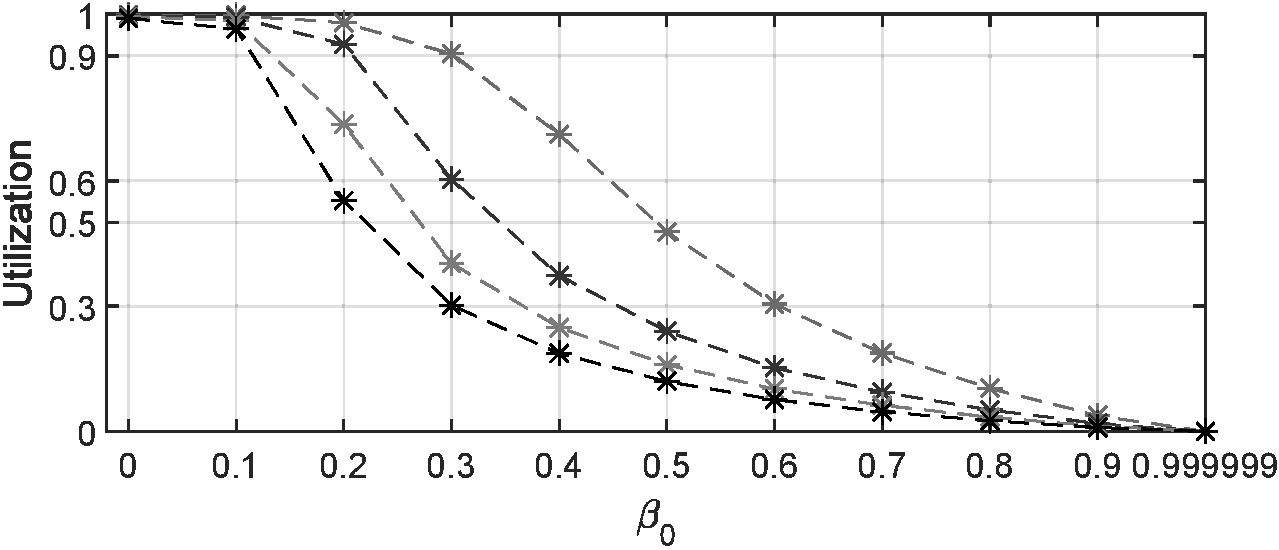}
	\caption{The utilization of self-owned instances under different values of $\beta_{0}$.}
    \label{Fig.3a}
\end{figure}

\vspace{0.25em}\noindent\textbf{Experiment 3.} Assume that there are some amount of self-owned instances, and we show the performance improvement of the proposed policies $\mathcal{P}$, compared with the policies that use $\mathcal{P}^{\prime}$ for spot and on-demand instances and (\ref{intuitive}) for self-owned instances. The simulation is done under the 2nd type of jobs that have a medium slackness, and the results are listed in Table~\ref{table-hybrid-use}, showing the improvement of performance by up to 75.68\%.

\begin{table}[!ht]
	\centering
		\caption{Performance Improvement for Three Types of Instances}
	\begin{threeparttable}[b]
		\begin{tabular}{| C{1.55cm} |  C{1.55cm} |  C{1.55cm} |  C{1.55cm} |}
			\hline
     $\rho_{200, 2}$   & $\rho_{400, 2}$   & $\rho_{600, 2}$   & $\rho_{800, 2}$  \\ \hline

        71.30\%   &  75.68\%   &   72.83\%   &   66.65\%   \\ \hline

       		\end{tabular}
		\label{table-hybrid-use}
	\end{threeparttable}
\end{table}

\vspace{0.25em}\noindent\textbf{Experiment 4.} Now, we show the performance of the proposed policies when online learning is applied. The simulation setting is the same as Experiment 3. The related results are illustrated in Table~\ref{table-regret}, showing a cost reduction by up to 66.71\%.

\begin{table}[!ht]
	\centering
		\caption{Performance Improvement under Online Learning}
	\begin{threeparttable}[b]
		\begin{tabular}{| C{1.2cm} | C{1.2cm} |  C{1.2cm} |  C{1.2cm} |  C{1.2cm} |}
			\hline
		$\overline{\rho}_{0,2}$    &  $\overline{\rho}_{200,2}$   & $\overline{\rho}_{400,2}$   & $\overline{\rho}_{600, 2}$   & $\overline{\rho}_{800, 2}$  \\ \hline
        60.89\%       &  63.28\%  &  66.71\%  &  63.60\%  &  51.11\%   \\ \hline	

		\end{tabular}
		\label{table-regret}
	\end{threeparttable}
\end{table}

\section{Concluding Remark}
\label{sec.conclusion}

Utilizing IaaS clouds cost-effectively is an important concern for all users. In this paper, we consider the problem of how to utilize different purchase options including spot and on-demand instances, in addition to possibly existing self-owned instances, to minimize the cost of processing all incoming jobs while respecting their response-time targets. Driven by the goal of maximizing the utilization of self-owned instances while optimizing the possibility of utilizing spot instances, we answer two underlying questions in the instance allocation process: to be cost-effective, what properties should be kept in the policy for allocating self-owned instances and what policy can maximize the utilization of spot instances, escaping unnecessary consumption of costly on-demand instances.

As a result, we propose parametric policies for the allocation of these three types of instances that achieve small costs. The proposed policies are adaptive and, facing the dynamic of cloud market, these policies use online learning to infer the optimal values of their parameters. Through numerical simulations, we show the effectiveness of our proposed policies, in particular that they achieve a cost reduction of up to 64.51\% when spot and on-demand instances are considered and of up to 43.74\% when self-owned instances are considered. In future, we will extend the framework of this paper to process precedence-constrained jobs.



\ifCLASSOPTIONcompsoc
  \section*{Acknowledgments}
\else
  \section*{Acknowledgment}
\fi

The work of Patrick Loiseau was supported by the French National Research Agency (ANR) through the Investissements davenir program (ANR-15-IDEX- 02), and by the Alexander von Humboldt Foundation. Part of Xiaohu Wu's work was done when he was with Eurecom, Sophia-Antipolis, France; in addition, his work was also supported by the European Union's Horizon 2020 research and innovation programme in the ROMA project (grant no.\ 754514). The work of Esa Hyyti\"a was supported by the Academy of Finland in the FQ4BD project (grant no.\ 296206).

\appendices

\section{Proofs of Propositions}

This section contains the proofs of the propositions in the Section 4.

\vspace{0.7em}\noindent\textbf{Proof of Proposition 4.1.} Assume that a job $j$ is allocated $r_{j}$ self-owned instances in $[a_{j}, a_{j}+d_{j}-1]$. At each of the first $\kappa_{0}$ allocations of $j$, the expected time of utilizing spot instances is $\beta\cdot Len$. If a job can be expected to be completed by the deadline by totally utilizing spot instances after the allocation of self-owned instances, we have that (\rmnum{1}) it could be expected that the workload processed by self-owned instances plus the workload processed by spot instances at every allocation of $j$ is no less than $z_{j}$, and (\rmnum{2}) after the allocation of self-owned instances, the allocation of spot and on-demand instances is always in the first phase as described in the Section~\ref{sec.rules}, i.e., the allocation is updated every hour where only spot instances are bid for.

Now, we analyze two cases. The first one is $d_{j}-\kappa_{0}\cdot Len > \beta\cdot Len$. In this case, in the ($\kappa_{0}+1$)-th execution of $j$, the expected time of utilizing spot instances is $\beta\cdot Len$; then, it is expected that
   \begin{align*}
       r_{j}\cdot d_{j} + (\kappa_{0}+1)\cdot (\delta_{j}-r_{j})\cdot Len\cdot \beta \geq z_{j}.
   \end{align*}
   This leads to that $r_{j} \geq r_{j}^{\prime}(\beta)$. The second case is $d_{j}-\kappa_{0}\cdot Len \leq \beta\cdot Len$. In this case, in the ($\kappa_{0}+1$)-th execution of $j$, the expected time of utilizing spot instances is $\min\{\beta\cdot Len, d_{j}-\kappa_{0}\cdot Len\}=d_{j}-\kappa_{0}\cdot Len$; then, it is expected that
   \begin{align*}
       r_{j}\cdot d_{j} & + \kappa_{0}\cdot (\delta_{j}-r_{j})\cdot Len\cdot \beta \\
       & + (d_{j}-\kappa_{0}\cdot Len)\cdot (\delta_{j}-r_{j}) \geq z_{j}.
   \end{align*}
   This leads to that $r_{j} \geq r_{j}^{\prime\prime}(\beta)$. As a summary of our analysis of both cases, the proposition holds. $\blacksquare$


\vspace{0.7em}\noindent\textbf{Proof of Proposition 4.2.}  When $x\in [0, \frac{d_{j}}{Len}-\kappa_{0})$, $g_{j}(x)=\max\{r_{j}^{\prime}(x), 0\}$; since $d_{j}\cdot\delta_{j}-z_{j}\geq 0$ and $(\kappa_{0}+1)\cdot Len>0$, $r_{j}^{\prime}(x)$ is a non-increasing function and so is $g_{j}(x)$. Similarly, when $x\in [\frac{d_{j}}{Len}-\kappa_{0}, 1)$, $g_{j}(x)=\max\{r_{j}^{\prime\prime}(x), 0\}$ is also non-increasing. In the rest of this proof, if suffices to show $g_{j}(x_{1})\geq g_{j}(x_{2})$ when $0\leq x_{1} < \frac{d_{j}}{Len}-\kappa_{0} \leq x_{2} < 1$. Given a job $j$, if $\kappa_{0}$ $=$ $0$, we have $g_{j}(x_{1})\geq 0 =g_{j}(x_{2})$. If $\kappa_{0}\geq 1$ and $d_{j}\cdot\delta_{j}=z_{j}$, we have $g_{j}(x_{1})=\delta_{j}=g_{j}(x_{2})$. If $\kappa_{0}\geq 1$ and $d_{j}\cdot\delta_{j}>z_{j}$, our analysis proceeds as follows. To prove $g_{j}(x_{1})\geq g_{j}(x_{2})$, it suffices to show $r_{j}^{\prime\prime}(x_{2})\leq r_{j}^{\prime}(x_{1})$; the function $r_{j}^{\prime\prime}(x)$ itself is non-increasing when $x\in [0, 1)$, and we have $r_{j}^{\prime\prime}(x_{2}) \leq  r_{j}^{\prime\prime}(x_{1})$. Hence, to prove $r_{j}^{\prime\prime}(x_{2})\leq r_{j}^{\prime}(x_{1})$, it suffices to prove $r_{j}^{\prime\prime}(x_{1})\leq r_{j}^{\prime}(x_{1})$, which can be proved by showing $A=(1-x_{1})\cdot\kappa_{0}\cdot Len \leq  d_{j}-(\kappa_{0}+1)\cdot Len\cdot x_{1}=B$. Since $x_{1}\in [0, \frac{d_{j}}{Len}-\kappa_{0})$, we have
\begin{center}
$B - A = d_{j}-(\kappa_{0}+x_{1})\cdot Len >0$.
\end{center}
Finally, the proposition holds. $\blacksquare$

\begin{figure*}[t]
	\centering
		\includegraphics[width=6.75in]{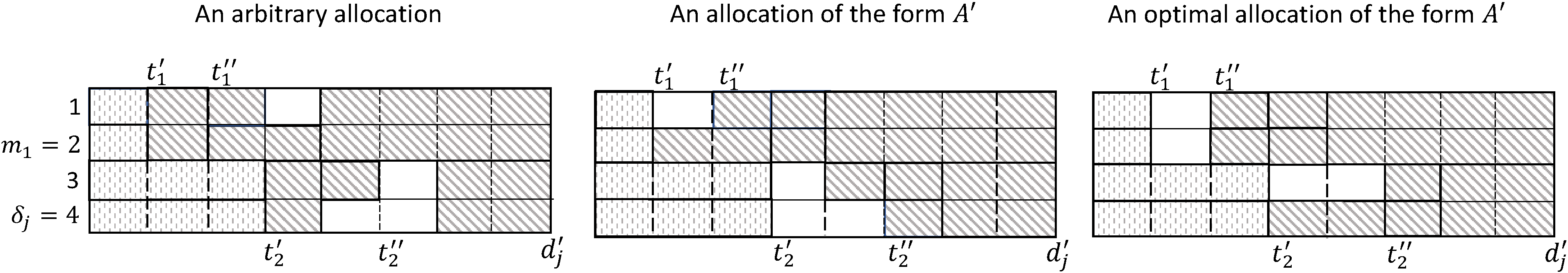}\\		
	\caption{Illustration for Proposition~4.7: the area of diagonal stripes denotes the allocation of on-demand instances to $j$.}\label{Fig.43}
\end{figure*}


\vspace{0.7em}\noindent\textbf{Proof of Proposition 4.4.} Firstly, we prove by contradiction that the optimal value of $o_{j}^{\kappa_{1}}$ is 0. Assume that $\hat{o}_{j}^{1}, \cdots, \hat{o}_{j}^{\kappa_{1}}$ are an optimal solution to (8) where $\hat{o}_{j}^{\kappa}\geq 1$. The constraint (6) has no effect on the value of $o_{j}^{\kappa_{1}}$. We can reduce the value of $\hat{o}_{j}^{\kappa_{1}}$ to 0; such reduction can still guarantee that (7) is satisfied, and $\hat{o}_{j}^{1}, \cdots, \hat{o}_{j}^{\kappa_{1}-1}, o_{j}^{\kappa_{1}}=0$ are a feasible solution to (8) under which (8) achieves a higher value, which contradicts that $\hat{o}_{j}^{1}, \cdots, \hat{o}_{j}^{\kappa_{1}}$ are an optimal solution to (8). Secondly, when $o_{j}^{\kappa_{1}}=0$, the objective function (8) equals $(\sum_{i=1}^{\kappa_{1}-1}{(\delta_{j}-o_{j}^{i})}+\delta_{j})\cdot Len\cdot\beta$. Under constraint (6), $\sum_{i=1}^{\kappa_{1}-1}{(\delta_{j}-o_{j}^{i})}\leq \frac{d_{j} \cdot \delta_{j} - z_{j}}{Len\cdot (1-\beta)}$. Since $o_{j}^{1}, \cdots, o_{j}^{\kappa_{1}-1}$ are integers, the maximum possible value of $\sum_{i=1}^{\kappa_{1}-1}{(\delta_{j}-o_{j}^{i})}$ is $\nu(z_{j}, d_{j})$. On the other hand, since $\delta_{j}-o_{j}^{i}\leq \delta_{j}$, the constraint (5) indicates that $\sum_{i=1}^{\kappa_{1}-1}{(\delta_{j}-o_{j}^{i})}\leq (\kappa_{0}-1)\cdot\delta_{j}$. Hence, the maximum possible value of $\sum_{i=1}^{\kappa_{1}-1}{(\delta_{j}-o_{j}^{i})}$ is $\min\{ \nu(z_{j}, d_{j}), (\kappa_{0}-1)\cdot\delta_{j}\}$. Now, we further show it is feasible. If $(\kappa_{0}-1)\cdot\delta_{j}\leq \nu(z_{j}, d_{j})$, $\sum_{i=1}^{\kappa_{1}-1}{(\delta_{j}-o_{j}^{i})}=(\kappa_{0}-1)\cdot\delta_{j}$ which leads to $\kappa_{0}-1\leq \kappa_{1}-1$; to satisfy (5), we have $\kappa_{1}=\kappa_{0}$. Then, constraint (7) holds trivially and constraint (6) is also satisfied. If $(\kappa_{0}-1)\cdot\delta_{j}>\nu(z_{j}, d_{j})$, $\sum_{i=1}^{\kappa_{1}-1}{(\delta_{j}-o_{j}^{i})}=\nu(z_{j}, d_{j})$; in this case, we have $\kappa_{1}-1\leq \kappa_{0}-1$. Furthermore, we also have $\nu(z_{j}, d_{j})+\delta_{j}>\frac{d_{j} \cdot \delta_{j} - z_{j}}{Len\cdot (1-\beta)}$ and (7) is satisfied. Finally, the proposition holds. $\blacksquare$


\vspace{0.7em}\noindent\textbf{Proof of Proposition 4.6.} We can check that when the strategy of utilizing spot instances is as above, $o_{j}^{1}, \cdots, o_{j}^{\kappa_{1}}$ are of the form in Proposition~4.4; hence, it is optimal. $\blacksquare$

\vspace{0.7em}\noindent\textbf{Proof of Proposition 4.7.} Let us consider an arbitrary allocation of on-demand instances to process the remaining $z_{j}^{i_{j}+1}$ workload, denoted by $\mathcal{A}$, also illustrated in Fig.~\ref{Fig.43} (left). These workload will be processed on $\delta_{j}$ instances, and let $x_{h}$ denote the total workload processed at the $h$-th instance where
\begin{equation}\label{constraint-on-demand-0}
 \sum\nolimits_{h=1}^{\delta_{j}}{x_{h}} \geq z_{j}^{i_{j}+1},
\end{equation}
\begin{equation}\label{constraint-on-demand-1}
\begin{split}
 x_{1}, \cdots, x_{m_{1}} \in [0, d_{j}^{\prime}-t_{1}^{\prime}+1]&,\\
 x_{m_{1}+1}, \cdots, x_{\delta_{j}} \in [0, d_{j}^{\prime}-t_{2}^{\prime}+1].
\end{split}
\end{equation}
The allocation $\mathcal{A}$ can be transformed into an allocation $\mathcal{A}^{\prime}$ with the following form without increasing the total cost of utilizing instances: the $x_{h}$ workload of the $h$-th instance is processed from the deadline $d_{j}^{\prime}$ towards earlier slots, i.e., in $[d_{j}^{\prime}-x_{h}+1, d_{j}^{\prime}]$,
which is illustrated in Fig.~\ref{Fig.43} (middle). Hence, in the following, we only need to show the cost-optimal strategy of utilizing instances when the allocation is of the form $\mathcal{A}^{\prime}$.

As illustrated in the Fig.~\ref{Fig.422}, let $\hat{\mathcal{I}}_{1}=[t_{1}^{\prime}, d_{j}^{\prime}]$ and $\hat{\mathcal{I}}_{2}=[t_{2}^{\prime}, d_{j}^{\prime}]$. From $d_{j}^{\prime}$ towards earlier slots in $\hat{\mathcal{I}}_{1}$ (resp. in $\hat{\mathcal{I}}_{2}$), let every $Len$ slots constitute a time interval, i.e., $\mathcal{I}_{i}=[d_{j}^{\prime}+1-i\cdot Len, d_{j}^{\prime}-(i-1)\cdot Len]$; for $\hat{\mathcal{I}}_{1}$ the last interval is $\mathcal{I}_{\hat{\kappa}_{1}+1} = [t_{1}^{\prime}, t_{1}^{\prime\prime}-1]$ (resp. for $\hat{\mathcal{I}}_{2}$  the last is $\mathcal{I}_{\hat{\kappa}_{2}+1} = [t_{2}^{\prime}, t_{2}^{\prime\prime}-1]$). Now, we describe the cost structure when the allocation of $j$ is of the form $\mathcal{A}^{\prime}$. We use $x_{h,i}$ to denote the workload processed by the $h$-th instance in $\mathcal{I}_{i}$ where for all $h\in [1, m_{1}]$,
\begin{equation}\label{constraint-on-demand-00-1}
 x_{h,1}, \cdots, x_{h,\hat{\kappa}_{1}} \in [0, Len],\enskip  x_{h,\hat{\kappa}_{1}+1}\in [0, \phi_{1}],
\end{equation}
and for all $h\in [m_{1}+1, \delta_{j}]$,
\begin{equation}\label{constraint-on-demand-00-2}
 x_{h,1}, \cdots, x_{h,\hat{\kappa}_{2}} \in [0, Len],\enskip  x_{h,\hat{\kappa}_{2}+1}\in [0, \phi_{2}].
\end{equation}
Let $\psi_{h}=\left\lceil \frac{x_{h}}{Len} \right\rceil$; under the allocation form of $\mathcal{A}^{\prime}$, we have for all $h\in [1, \delta_{j}]$ that
\begin{equation}\label{constraint-on-demand-2}
\begin{split}
& x_{h,1}= \cdots = x_{h,\psi_{h}-1}=Len,\\
& x_{h, \psi_{h}} = x_{h}-(\psi_{h}-1)\cdot Len,\\
& \text{the other } x_{h,i}=0,
\end{split}
\end{equation}
and
\begin{equation}\label{constraint-on-demand-equation}
\begin{split}
& x_{h} = \sum\nolimits_{i=1}^{\hat{\kappa}_{1}+1}{x_{h,i}},  \text{ if } h\in [1, m_{1}]\\
& x_{h} = \sum\nolimits_{i=1}^{\hat{\kappa}_{2}+1}{x_{h,i}},  \text{ if } h\in [m_{1}+1, \delta_{j}]
\end{split}
\end{equation}
where $0\leq x_{h, \psi_{h}}< Len$. We define the sign function $sgn(x)$: it equals 1 if $x>0$ and 0 if $x=0$. Let
\begin{equation}\label{constraint-on-demand-3}
y_{h,i}=sgn(x_{h,i})\in \{0, 1\},
\end{equation}
and the price of utilizing the $h$-th instance is $p$ times the sum of all $y_{h,i}$; here, by (\ref{constraint-on-demand-2}), the sum of all $y_{h,i}$ is $\psi_{h}$.

The cost minimization problem under the allocation form of $\mathcal{A}^{\prime}$ is as follows, referred to as $\boldsymbol{\mathcal{Q}}$\textbf{-\Rmnum{1}}:
\begin{align} \label{optimal-on-demand}
\text{min}\enskip \sum\limits_{h=1}^{m_{1}}{\sum\limits_{i=1}^{\hat{\kappa}_{1}+1}{p\cdot y_{h,i}}} + \sum\limits_{h=m_{1}+1}^{\delta_{j}}{\sum\limits_{i=1}^{\hat{\kappa}_{2}+1}{p\cdot y_{h,i}}}
\end{align}
subject to the constraints (\ref{constraint-on-demand-0})-(\ref{constraint-on-demand-3}). $\mathcal{Q}$-\Rmnum{1} corresponds to another optimization problem: its objective function is also (\ref{optimal-on-demand}), subject to (\ref{constraint-on-demand-0}), (\ref{constraint-on-demand-1}), (\ref{constraint-on-demand-equation}), (\ref{constraint-on-demand-3}), and for all $h\in [1, m_{1}]$
\begin{align}
 x_{h,1}, \cdots, x_{h,\hat{\kappa}_{1}}\in \{0, Len\},\enskip x_{h, \hat{\kappa}_{1}+1} \in \{ 0, \phi_{1} \},\label{constraint-on-demand-5-0}
\end{align}
and for all $h\in [m_{1}+1, \delta_{j}]$,
\begin{align}
 x_{h,1}, \cdots, x_{h,\hat{\kappa}_{2}}\in \{0, Len\},\enskip x_{h, \hat{\kappa}_{2}+1} \in \{ 0, \phi_{2} \}.\label{constraint-on-demand-5-1}
\end{align}
The above mathematical problem is referred to as $\boldsymbol{\mathcal{Q}}$\textbf{-\Rmnum{2}}. In the following, we prove that (\rmnum{1}) any solution to $\mathcal{Q}$-\Rmnum{1} corresponds to a solution to $\mathcal{Q}$-\Rmnum{2} and their objective function (\ref{optimal-on-demand}) under these two solutions achieves the same value; then, (\rmnum{2}) an optimal solution to $\mathcal{Q}$-\Rmnum{2} corresponds to a solution to $\mathcal{Q}$-\Rmnum{1}, and their objective function under these two solutions also achieves the same value. The first point shows that the optimal value of $\mathcal{Q}$-\Rmnum{2} is a lower bound of the optimal value of $\mathcal{Q}$-\Rmnum{1}. The second point shows that there is a solution to $\mathcal{Q}$-\Rmnum{1} under which the value of (\ref{optimal-on-demand}) equals the optimal value of $\mathcal{Q}$-\Rmnum{2}; hence, this solution to $\mathcal{Q}$-\Rmnum{1} is optimal and we will give such an optimal solution while proving the two points above.

The decision variables of both $\mathcal{Q}$-\Rmnum{1} and $\mathcal{Q}$-\Rmnum{2} are the same, i.e., $\{y_{h,i} | h\in [1,m_{1}], i\in [1, \hat{\kappa}_{1}+1]\}\cup \{y_{h,i} | h\in [m_{1}+1, \delta_{j}], i\in [1, \hat{\kappa}_{2}+1]\}$. Given a solution to $\mathcal{Q}$-\Rmnum{1} denoted by $Y$, we set the decision variables of $\mathcal{Q}$-\Rmnum{2} to the same values. Now, we show $Y$ is a feasible solution to $\mathcal{Q}$-\Rmnum{2}. Both in $\mathcal{Q}$-\Rmnum{2} and $\mathcal{Q}$-\Rmnum{1}, the same $x_{h,i}$ is set to non-zero and the others are set to zero by (\ref{constraint-on-demand-3}), and the non-zero's $x_{h,i}$ in $\mathcal{Q}$-\Rmnum{2} is $\geq$ the $x_{h,i}$ in $\mathcal{Q}$-\Rmnum{1} by (\ref{constraint-on-demand-00-1}), (\ref{constraint-on-demand-00-2}), (\ref{constraint-on-demand-5-0}), and (\ref{constraint-on-demand-5-1}). Since (\ref{constraint-on-demand-0}) holds in $\mathcal{Q}$-\Rmnum{1} where the value of $x_{h}$ is defined in (\ref{constraint-on-demand-equation}), we have (\ref{constraint-on-demand-0}) also holds in $\mathcal{Q}$-\Rmnum{2}. Hence, $Y$ is feasible. Furthermore, $\mathcal{Q}$-\Rmnum{1} and $\mathcal{Q}$-\Rmnum{2} have the same objective function (\ref{optimal-on-demand}) that achieves the same value under the same $Y$. This finishes proving the first point above.

Now, we give an optimal solution to $\mathcal{Q}$-\Rmnum{2}. The physical meaning of $\mathcal{Q}$-\Rmnum{2} can be explained as follows. There are 3 types of items each with a weight $p$: (\rmnum{1}) $\hat{\kappa}_{1}\cdot m_{1} + \hat{\kappa}_{2}\cdot m_{2}$ items each with a size $Len$, (\rmnum{2}) $m_{1}$ items each with a size $\phi_{1}$ ($< Len$), and (\rmnum{3}) $m_{2}$ items each with a size $\phi_{2}$ ($< Len$); the objective is to select some items such that the total size of chosen items is $\geq z_{j}^{i_{j}+1}$ (satisfying (\ref{constraint-on-demand-0})) while their total weight (i.e., (\ref{optimal-on-demand})) is minimized. Since items have the same weight, an optimal solution is just to select the minimum number of items, e.g., the items with the largest sizes, to exactly satisfy the size requirement; correspondingly, an optimal solution to $\mathcal{Q}$-\Rmnum{2} is such that the value of $y_{h,i}\in \{0, 1\}$ satisfies
\begin{equation}\label{constraint-on-demand-6}
\begin{split}
&y_{0}=\sum\limits_{h=1}^{m_{1}}{\sum\limits_{i=1}^{\hat{\kappa}_{1}}{y_{h,i}}}+\sum\limits_{h=m_{1}+1}^{\delta_{j}}{\sum\limits_{i=1}^{\hat{\kappa}_{2}}{y_{h,i}}},\\
&y_{1}=\sum\limits_{h=1}^{m_{1}}{y_{h,\hat{\kappa}_{1}+1}},\enskip y_{2}=\sum\limits_{h=m_{1}+1}^{\delta_{j}}{y_{h,\hat{\kappa}_{2}+1}},
\end{split}
\end{equation}
where $y_{0}, y_{1}, y_{2}$ are described in Proposition~4.7. We denote such a solution by $OPT_{2}$. Here, we set $x_{h,i}$ to non-zero if $y_{h,i}=1$ and zero otherwise by (\ref{constraint-on-demand-3}); the particular value of $x_{h,i}$ depends on (\ref{constraint-on-demand-5-0}) and (\ref{constraint-on-demand-5-1}), and it determines the value of $x_{h}$ by (\ref{constraint-on-demand-equation}) that can satisfy (\ref{constraint-on-demand-1}); by (\ref{constraint-on-demand-6}), $x_{h}$ can satisfy (\ref{constraint-on-demand-0}).

Next, we show $OPT_{2}$ corresponds to a solution $OPT_{1}$ to $\mathcal{Q}_{1}$-\Rmnum{1}, and their objective function~(\ref{optimal-on-demand}) under $OPT_{1}$ and $OPT_{2}$ achieves the same value.
In $\mathcal{Q}$-\Rmnum{1}, we set the value of $x_{h}$ to the same value when the solution to $\mathcal{Q}$-\Rmnum{2} is $OPT_{2}$ where the constraints (\ref{constraint-on-demand-0}) and (\ref{constraint-on-demand-1}) in $\mathcal{Q}$-\Rmnum{1} are naturally satisfied; then, we use (\ref{constraint-on-demand-2}) to obtain feasible $x_{h,i}$ that will also satisfy (\ref{constraint-on-demand-00-1}) and (\ref{constraint-on-demand-00-2}); by (\ref{constraint-on-demand-3}), the value of $y_{h,i}$ in $\mathcal{Q}$-\Rmnum{1} can be set, deriving a feasible solution $OPT_{1}$ to $\mathcal{Q}$-\Rmnum{1}. In both $\mathcal{Q}$-\Rmnum{1} and $\mathcal{Q}$-\Rmnum{2}, we have the number of non-zero's $y_{h,i}$ is $\left\lceil x_{h}/Len \right\rceil$; hence, $\mathcal{Q}$-\Rmnum{1} under $OPT_{1}$ and $\mathcal{Q}$-\Rmnum{2} under $OPT_{2}$ achieve the same value.
Finally, $OPT$ is an optimal solution to $\mathcal{Q}$-\Rmnum{1} by the two points above.

So far, we have completed the proof of Proposition 4.7. In the proof, we have given an optimal solution $OPT_{1}$ to the problem $\mathcal{Q}$-\Rmnum{1}; we can thus know the values of $y_{0}, y_{1}, y_{2}$ in (\ref{constraint-on-demand-6}) that correspond to a specific cost-optimal allocation of on-demand instances. This is also illustrated in Fig.~\ref{Fig.43} (right). $\blacksquare$

\section{The Online Learning Algorithm}

\begin{algorithm}[t]
	\SetKwInOut{Input}{Input}
	\SetKwInOut{Output}{Output}
	
	\Input{a set $\mathcal{P}$ of $n$ policies, each $\pi$ parameterized for indexing so that $\pi\in\{1, 2, \cdots, n\}$; the set $\mathcal{J}_{t}$ of jobs that arrive at $t$; }
	\BlankLine
	
    $i\leftarrow 1$\tcp*{\footnotesize{$i$ is used to track the number of times updating the weight distribution}}

	initialize the weight vector of policies: $w_{i}=\{w_{i, 1}, \cdots, w_{i, n}\}=\{1/n, \cdots, 1/n\}$\;

	\For{$t\leftarrow 1$ \KwTo $T$}{
         \tcp{\footnotesize{time goes from slot 1 towards later slots}}

            $\mathcal{J}_{t}^{\prime}\leftarrow \mathcal{J}_{t}$\;

            \While{$\mathcal{J}_{t}^{\prime}\neq \emptyset$}{

			    get a job $j$ from $\mathcal{J}_{t}^{\prime}$ such that $j$ is the smallest\;
			
			    pick a policy $\pi_{j}=\pi$ with a probability $w_{i,\pi}$, applied to $j$\tcp*{\footnotesize{When time $t$ goes from $a_{j}$ to $a_{j}+d_{j}-1$, the allocation of instances to $j$ is completed via the Algorithm~2}}

                $\mathcal{J}_{t}^{\prime}\leftarrow \mathcal{J}_{t}^{\prime}-\{j\}$\;

            }

	        	\If{$t > d$}{
		
                   $\mathcal{J}_{t-d}^{\prime\prime}\leftarrow \mathcal{J}_{t-d}$\;

			       \While{$\mathcal{J}_{t-d}^{\prime\prime}\neq \emptyset$}{
				
				        get a job $j^{\prime}$ from $\mathcal{J}_{t-d}^{\prime\prime}$ such that $j$ is the smallest\;

                        compute the cost of completing $j^{\prime}$ in the period of $[a_{j^{\prime}}, a_{j^{\prime}}+d_{j^{\prime}}-1]$ under every policy $\pi\in\mathcal{P}$, denoted by $c_{j^{\prime}}(\pi)$\tcp*{\footnotesize{When $t^{\prime}$ ranges from $a_{j^{\prime}}$ to $a_{j^{\prime}}+d_{j^{\prime}}-1$, the allocation to $j^{\prime}$ is completed via Dynalloc$\left(a_{j^{\prime}}, d_{j^{\prime}}, z_{j^{\prime}}^{\prime}, \delta_{j^{\prime}}, \beta_{0}, \beta, b, N_{\beta_{0}}, t^{\prime} \right)$; the cost is recorded accordingly}}

				       $\eta_{t}\leftarrow \sqrt{\frac{2\log{n}}{d(t-d)}}$\;
				
			        	\For{$\pi \leftarrow 1$ \KwTo $n$}{
		        			$w_{i+1,\pi}^{\prime}\leftarrow w_{i,\pi}\exp^{-\eta_{t}c_{j^{\prime}}(\pi)}$\;
				        }
				
        				\For{$\pi \leftarrow 1$ \KwTo $n$}{
		        			$w_{i+1,\pi}\leftarrow \frac{w_{i+1,\pi}^{\prime}}{\sum_{i=1}^{n}{w_{i+1, i}^{\prime}}}$\;
				        }

                       $i\leftarrow i+1$\;
				
				       $\mathcal{J}_{t-d}^{\prime\prime}\leftarrow \mathcal{J}_{t-d}^{\prime\prime}-\{j^{\prime}\}$\;

		         	}
	 	}
	}
	\caption{OptiLearning\label{Regret}}
\end{algorithm}

In this section, we formally describe the online learning algorithm (TOLA) used in Section~\ref{sec.learning} to learn the most cost-effective parameters  $\beta_{0}, \beta, b$. The online learning algorithm that we adopt is the one in \cite{Jain14}, presented as Algorithm~\ref{Regret}, and is also a form of the classic weighted majority algorithm. There are a set of jobs $\mathcal{J}$ that arrive sequentially over time, indexed by $j=1, 2, \cdots$, and a set of $n$ parametric policies $\mathcal{P}$ each specified by $\{\beta_{0}, \beta, b\}$ and indexed by $\pi=1, 2, \cdots$. Let $d=\max_{j\in\mathcal{J}}\{d_{j}\}$, i.e., the maximum relative deadline of all jobs. Let $\mathcal{J}_{t}\subseteq \mathcal{J}$ denote all jobs $j$ that arrive at time slot $t$, i.e., $a_{j}=t$. There is a weight distribution $w$ over $n$ policies; initially, it is a discrete uniform distribution $\{1/n, \cdots, 1/n\}$ (lines 1-2 of Algorithm~\ref{Regret}). The distribution $w$ will be updated as time goes by (lines 3, 9-20) and it is used to choose a policy in $\mathcal{P}$ for each job (lines 4-8).

Time $t$ goes from slot 1 to later slots (line 3). When a job $j\in\mathcal{J}_{t}$ arrives where $t=a_{j}$, the algorithm randomly picks a policy $\pi_{j}$ from $\mathcal{P}$ according to the current $w$ and bases the allocation of various instances to $j$ on that policy (lines 4-8). Let the policy $\pi_{j}$ be defined by $\{\beta_{0}^{(j)}, \beta^{(j)}, b^{(j)}\}$ and let the array $N$ denote the number of self-owned instances unreserved/available at every slot after allocating self-owned instances to the previous jobs $1, \cdots, j-1$ via the policy (4) with $\beta_{0}^{(1)}, \cdots, \beta_{0}^{(j-1)}$ as the control parameters respectively; initially, if $j=1$, we have $N(t)=R$ for all $t\in [1, T]$ where $R$ is the total number of self-owned instances. As time $t$ goes from slot $a_{j}$ towards $a_{j}+d_{j}-1$, the allocation of instances to $j$ is taken by executing Algorithm~2, i.e., Dynalloc$\left(a_{j}, d_{j}, z_{j}^{\prime}, \delta_{j}, \beta_{0}^{(j)}, \beta^{(j)}, b^{(j)}, N, t \right)$, at every slot $t\in [a_{j}, a_{j}+d_{j}-1]$ until $j$ is allocated enough instances to complete $z_{j}$ workload. As modeled in the Section~\ref{sec.model}, the cost of completing a job $j$ is from the use of spot and on-demand instances alone, and denoted by $c_{j}(\pi_{j})$.

On the other hand, when time goes to the beginning of slot $d+1$, the update of the weight distribution of policies begins (line 9). In particular, if $\mathcal{J}_{t-d} \neq \emptyset$, we sequentially consider every job $j^{\prime}$ in $\mathcal{J}_{t-d}$ (lines 10-12, 20). Let a virtual array $N_{\beta_{0}}$ denote the number of self-owned instances unreserved/available at every slot if the allocation of self-owned instances to the previous jobs $1, \cdots, j^{\prime}-1$ follows the policy (4) with the same control parameter $\beta_{0}$; initially, if $j^{\prime}=1$, we have $N_{\beta_{0}}(t)=R$ for all $t\in [1, T]$. For every policy $\pi\in\mathcal{P}$, it is defined by $\{\beta_{0}, \beta, b\}$; since the spot prices in $[t-d, t-1]$ have been revealed, we are able to compute the cost of completing the job $j^{\prime}$ in $[a_{j}, a_{j}+d_{j}-1]$ under the policy $\pi$ with $N_{\beta_{0}}$ recording the available self-owned instances, denoted by $c_{j^{\prime}}(\pi)$ (line 13). Subsequently, the weight of each policy (i.e., its probability) is updated so that the lower-cost (higher-cost) polices of this job are re-assigned the enlarged (resp. reduced) weights (lines 14-18).
Let $N^{\prime}=|\cup_{t=d+1}^{T}{\mathcal{J}_{t}}|$, i.e., the number of all jobs that arrive in $[d+1, T]$, and, as proved in \cite{Jain14}, we have that
\begin{proposition}\label{proposi-online-learning}
For all $\delta\in (0, 1)$, it holds with a probability at least $1-\delta$ over the random of online learning that
\begin{center}
$\max_{\pi\in\mathcal{P}}\left\{ \sum_{t\in\cup_{t=d+1}^{T}{\mathcal{J}_{t}}}{\frac{c_{j}(\pi_{j})-c_{j}(\pi)}{N^{\prime}}} \right\}\leq 9\sqrt{\frac{2d\log{(n/\delta)}}{N^{\prime}}}$.
\end{center}
\end{proposition}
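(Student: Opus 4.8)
The plan is to treat Algorithm~\ref{Regret} as an instance of the Hedge / weighted-majority algorithm run against $n=|\mathcal{P}|$ experts, one per policy, where the per-round loss of expert $\pi$ on job $j$ is the (normalized) completion cost $c_j(\pi)$, and to derive the claimed high-probability average-regret bound in three stages, following the analysis of \cite{Jain14}.

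First I would dispose of the delay. The loss $c_j(\pi)$ of any policy on a job $j$ with $a_j=t$ becomes observable only at slot $t+d$, since it depends on the realized spot prices over $[a_j,a_j+d_j-1]\subseteq[t,t+d-1]$. I would therefore partition the feedback-eligible jobs $\cup_{t=d+1}^{T}\mathcal{J}_t$ into $d$ groups according to $a_j \bmod d$; within each group the feedback is never delayed, so the restriction of Algorithm~\ref{Regret} to that group is exactly the standard no-delay weighted-majority update. Write $N_k'$ for the number of such jobs in group $k$, so $\sum_{k=1}^{d}N_k'=N'$.

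Second, for each group $k$ I would invoke the standard Hedge guarantee: with an appropriately tuned learning rate, the expected cumulative loss of the algorithm's randomly sampled policy exceeds that of the best fixed policy by at most $O\!\left(\sqrt{N_k'\log n}\right)$; to upgrade this to a statement holding with high probability I would apply the Azuma--Hoeffding inequality to the martingale whose increments are the gap between the realized loss of the sampled policy and its conditional expectation (each increment lying in $[-1,1]$ after normalizing costs), which contributes a further $O\!\left(\sqrt{N_k'\log(1/\delta')}\right)$ fluctuation term. Choosing $\delta'=\delta/d$ and taking a union bound over the $d$ groups makes the whole event hold with probability at least $1-\delta$, at the cost of replacing $\log(1/\delta')$ by $\log(d/\delta)$, which together with the $\log n$ from the expert count is absorbed into $\log(n/\delta)$ up to constants.

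Finally I would recombine the groups: summing the per-group bounds and applying Cauchy--Schwarz (equivalently, concavity of $\sqrt{\cdot}$), $\sum_{k=1}^{d}\sqrt{N_k'\log(n/\delta)}\le\sqrt{d\log(n/\delta)\sum_{k=1}^{d}N_k'}=\sqrt{d\,N'\log(n/\delta)}$, and dividing by $N'$ gives an average-regret bound of the form $C\sqrt{d\log(n/\delta)/N'}$. I expect the only delicate part to be the bookkeeping of the various constants (the Hedge constant, the Azuma constant, and the slack from folding $\log d$ into $\log(n/\delta)$), plus checking that the normalization of $c_j$ to $[0,1]$ is consistent with how the costs enter the weight update, so as to verify that they combine to at most $9\sqrt{2}$; the structural argument (delay decomposition $\to$ per-block Hedge plus Azuma $\to$ Cauchy--Schwarz recombination) is exactly the one established in \cite{Jain14}, so I would largely cite their proof and only re-verify the constant.
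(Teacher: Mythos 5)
Your proposal is correct and follows essentially the same route as the paper, which offers no independent derivation and simply invokes the result as proved in \cite{Jain14}: the standard delayed-feedback analysis of the weighted-majority/Hedge scheme (partition into $d$ interleaved no-delay subsequences, per-block regret plus an Azuma--Hoeffding concentration term, union bound, and Cauchy--Schwarz recombination) is exactly the argument behind the cited bound. Your caveats about verifying the constant $9\sqrt{2}$ and the normalization of $c_j$ in the exponential weight update are the right places to be careful, but they do not affect the structure of the proof.
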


Proposition~\ref{proposi-online-learning} says that, as TOLA runs, the actual total cost of completing all jobs is close to the cost of completing all jobs under a policy $\pi^{*}\in\mathcal{P}$ that generates the lowest total cost. Recall that a policy is defined by a tuple of parameters from $\mathcal{P}$.

\end{document}